\documentclass[11pt,letterpaper]{article}



\usepackage{geometry}

\usepackage[numbers]{natbib}
\bibliographystyle{abbrvnat}
\usepackage{optparams,ifthen,xspace}

\makeatletter
\let\citeptemp\citep
\long\def\citep@[#1][#2]#3{%
	\ifthenelse{\equal{#1}{}}{%
		\ifthenelse{\equal{#2}{}}{%
			\citeptemp[][]{#3}}{%
			\citeptemp[][#2]{#3}}}{%
		\ifthenelse{\equal{#2}{}}{%
			(#1\citeptemp[][]{#3})}{%
			(#1\citeptemp[][]{#3}, #2)}}}
\renewcommand{\citep}{\optparams{\citep@}{[][]}}
\makeatother


\usepackage{amsmath,amssymb,amsthm}
\theoremstyle{plain}
\newtheorem{proposition}{Proposition}
\newtheorem{theorem}{Theorem}
\newtheorem{lemma}{Lemma}
\newtheorem{corollary}{Corollary}

\theoremstyle{definition}
\newtheorem{example}{Example}

\def\clap#1{\hbox to 0pt{\hss#1\hss}}
  \def\mathclap{\mathpalette\mathclapinternal}

\def\mathclapinternal#1#2{\clap{$\mathsurround=0pt#1{#2}$}}

\newcommand{\ie}{i.e.,\xspace}
\newcommand{\eg}{e.g.,\xspace}

\newcommand{\reals}{\mathbb{R}}

\newcommand{\hX}{\hat{X}}
\newcommand{\hx}{\hat{x}}
\newcommand{\hf}{\hat{f}}
\newcommand{\hp}{\hat{p}}
\newcommand{\hM}{\hat{M}}

\newcommand{\Wmax}{W_{\mathit{max}}}

\newcommand{\ctrf}{\reals_{>}^n}
\newcommand{\midd}{:}
\newcommand{\aGSP}{$\alpha$-GSP\xspace}
\newcommand{\one}{\mathbf{1}}
\newcommand{\oneGSP}{$\one$-GSP\xspace}  
\newcommand{\aVCG}{$\alpha$-VCG\xspace}
\newcommand{\nVCG}{$n$-VCG\xspace}

\newcommand{\SigmaVCG}{$\Sigma$-VCG\xspace}

\newcommand{\secref}[1]{Section~\ref{#1}}

\newcommand{\lemref}[1]{Lemma~\ref{#1}}
\newcommand{\thmref}[1]{Theorem~\ref{#1}}
\newcommand{\appref}[1]{Appendix~\ref{#1}}

\newcommand{\propref}[1]{Proposition~\ref{#1}}

\sloppy

\begin{document}

\title{Simplicity-Expressiveness Tradeoffs in Mechanism Design} 


\author{%
	Paul D\"{u}tting\thanks{%
	Ecole Polytechnique F\'ed\'erale de Lausanne (EPFL), 
	Station 14, CH-1015 Lausanne, Switzerland, 
	Email: \texttt{paul.duetting@epfl.ch}. 
	Research supported by a European Young Investigators Award (EURYI).}
	\and 
	Felix Fischer\thanks{%
	School of Engineering and Applied Sciences, Harvard University, 
	33 Oxford Street, Cambridge, MA~02138, USA, 
	Email: \texttt{fischerf@seas.harvard.edu}. 
	Research supported by DFG grant FI~1664/1-1.} 
	\and David C.~Parkes\thanks{%
	School of Engineering and Applied Sciences, Harvard University, 
	33 Oxford Street, Cambridge, MA~02138, USA, 
	Email: \texttt{parkes@eecs.harvard.edu}}}

\date{}

\maketitle

\begin{quote}
	\emph{Everything should be made as simple as possible, but no simpler.} \\\hspace*{\fill} 
	-- Albert Einstein 
\end{quote}

\begin{abstract}
A fundamental result in mechanism design theory, the so-called revelation principle, asserts that for many questions concerning the existence of mechanisms with a given outcome one can restrict attention to truthful direct revelation-mechanisms. In practice, however, many mechanism use a restricted message space. This motivates the study of the tradeoffs involved in choosing simplified mechanisms, which can sometimes bring benefits in precluding bad or promoting good equilibria, and other times impose costs on welfare and revenue. We study the simplicity-expressiveness tradeoff in two representative settings, sponsored search auctions and combinatorial auctions, each being a canonical example for complete information and incomplete information analysis, respectively. We observe that the amount of information available to the agents plays an important role for the tradeoff between simplicity and expressiveness. 
\end{abstract}

\section{Introduction}

A fundamental result in mechanism design theory, the so-called revelation principle, asserts that for many questions concerning the existence of mechanisms with a given outcome one can restrict attention to truthful direct-revelation mechanisms, \ie mechanisms in which agents truthfully reveal their type by stating it directly. In practice, however, the assumptions underlying the revelation principle often fail: other equilibria may exist besides the truthful one, or computational limitations may interfere. 
As a consequence, many practical mechanisms use a restricted message space. This motivates the study of the tradeoffs involved in choosing among mechanisms with different degrees of expressiveness. Despite their practical importance, these tradeoffs are currently only poorly understood. 

In sponsored search auctions, and adopting a complete information analysis, allowing every agent~$i$ to submit a valuation $v_{ij}$ for each slot~$j$ means that both the Vickrey-Clarke-Groves (VCG) mechanism and the Generalized Second Price (GSP) mechanism always admit a Nash equilibrium with zero revenue~\cite{Milg10a}. If agents face a small cost for submitting a non-zero bid, this becomes the unique equilibrium. If instead agent~$i$ is asked for a single bid~$b_i$, and his bid for slot $j$ is derived by multiplying it with a slot-specific click-through rate $\alpha_j$, the zero-revenue equilibria are eliminated. More surprisingly, this simplification does not introduce new equilibria (even when $\alpha_j$ is not correct for every agent~$i$), so minimum revenue over all Nash equilibria is strictly greater for the simplification than for the original mechanism. Moreover, if valuations can actually be decomposed into an agent-specific valuation $v_i$ and click-through rates $\alpha_j$, the simplification still has an efficient Nash equilibrium. \citet{Milg10a} concluded that simplification can be beneficial and need not come at a cost. 

For combinatorial auctions, and adopting an incomplete information analysis, the maximum social welfare over all outcomes of a mechanism strictly increases with expressiveness, for a particular measure of expressiveness based on notions from computational learning theory~\citep{BSS08}. Implicit in the result of~\citet{BSS08} is the conclusion that more expressiveness is generally desirable, as it allows a mechanism to achieve a more efficient outcome in more instances of the problem. 

Each of these results tells only part of the story. While \citeauthor{Milg10a}'s results on the benefits of simplicity are developed within an equilibrium framework and can be extended beyond sponsored search auctions, they critically require settings with complete information amongst agents in order to preclude bad equilibria while retaining good ones. In particular, \citeauthor{Milg10a} does not consider the potential loss in efficiency or revenue that can occur when agents are ignorant of each other's valuations when deciding how to bid within a simplified bidding language. \citeauthor{BSS08}, on the other hand, develop their results on the benefits of expressiveness (and thus the cost of simplicity) in an incomplete information context, but largely in the absence of equilibrium considerations.\footnote{The equilibrium analysis that \citeauthor{BSS08} provide is in regard to identifying a particular mechanism design in which the maximum social welfare achievable in any outcome can be achieved in a particular Bayes-Nash equilibrium.} In particular, these authors do not consider the potential problems that can occur due to the existence of bad equilibria in expressive mechanisms. 

\paragraph{Our Contribution}
The contribution of this paper is two\-fold. On a conceptual level, we analyze how different properties of a simplification affect the set of equilibria of a mechanism in both complete and incomplete information settings and argue that well-chosen simplifications can have a positive impact on the set of equilibria as a whole; either by precluding undesirable equilibria or by promoting desirable equilibria. We thus extend Milgrom's emphasis on simplification as a tool that enables equilibrium selection. On a technical level, we analyze simplified mechanisms for sponsored search auctions with complete information and combinatorial auctions with incomplete information. 

An important property when analyzing the impact of simplification on the set of equilibria is \emph{tightness}~\citep{Milg10a}, which requires that no additional equilibria are introduced. We observe that tightness can be achieved equally well in complete and incomplete information settings and give a sufficient condition.\footnote{This condition was already considered by~\citeauthor{Milg10a}, but only in the complete information case.}
Complementary to tightness is a property we call \emph{totality}, which requires all equilibria of the original mechanism to be preserved.
To the end of equilibrium selection, totality needs to be relaxed. Particular relaxations we consider in this paper are the preservation of the VCG outcome, \ie the outcome obtained in the dominant strategy equilibrium of the fully expressive VCG mechanism, and the existence of an equilibrium with a certain amount of social welfare or revenue relative to the VCG outcome. In addition, one might require that the latter property holds for \emph{every} equilibrium of the simplified mechanism. 

In the context of sponsored search, one reason to prefer a simplification is to preclude the zero revenue equilibrium discussed above. Another interesting property of a simplified GSP mechanism is that it preserves the VCG outcome even when the assumed click-through rates $\alpha_j$ are inexact. Recognizing that this claim cannot be made for the VCG mechanism under the same simplification,~\citet{Milg10a} uses this as an argument for the superiority of the GSP mechanism. But, this result that GSP is Vickrey-preserving requires an unnatural condition on the relationship between the assumed click-through rates and prices and thus agents' bids, and moreover does not preclude alternate simplifications of VCG that succeed in being Vickrey-preserving. In addition, we observe that the simplifications can still suffer arbitrarily low revenue in some equilibrium, in comparison with the VCG revenue. 
In our results on sponsored search, we identify a simplified GSP mechanism that preserves the VCG outcome without requiring any knowledge of the actual click-through rates, precludes zero revenue equilibrium, and always recovers at least half of the VCG payments for all slots but the first. For simplified VCG mechanisms, we obtain a strong negative result: \emph{every} simplification of VCG that supports the (efficient) VCG outcome in some equilibrium also has an equilibrium in which revenue is arbitrarily smaller than in the VCG outcome. 

In the context of combinatorial auctions, paradigmatic of course of settings with incomplete information, we first recall the previous observations by \citeauthor{HKMT04}~\citep{HM04,HKMT04} in regard to the existence of multiple, non-truthful, \emph{ex post} Nash equilibria of the VCG mechanism, each of which offers different welfare and revenue properties. In particular, if one assumes that participants will select equilibria with a particular (maximum) number of bids, social welfare can differ greatly among the different equilibria, revenue can be zero for some of them, and the existence of multiple Pareto optimal equilibria can make equilibrium selection for participants hard to impossible. 
Focusing again on \emph{tight} simplifications, we connect the analysis of \citeauthor{HKMT04}~\citep{HM04,HKMT04} with tightness, by establishing that a simplification is tight if and only if bids are restricted to a subset~$\Sigma$ of the bundles with a \emph{quasi-field} structure~\cite{HM04,HKMT04}, with values for the other bundles derived as the maximum value of any contained bundle. Through insisting on a tight simplification, we ensure that the worst-case behavior is no worse than that of the fully expressive VCG mechanism, even when $\Sigma$ (although simplified) contains too many bundles for agents to bid on all of them. Moreover, using a quasi-field simplification ensures that agents do not experience regret with respect to the bidding language, in the sense of wanting to send a message \emph{ex post} that was precluded. 
Finally, as any restriction of the bids to a subset of the bundles, restricting the bids to a quasi-field $\Sigma$ makes it a dominant strategy equilibrium for the agents to bid truthfully on these bundles. Simplification thus enables the mechanism designer to guide equilibrium selection, and our results suggest that the presence or absence of such guidance can have a significant impact on the economic properties of the mechanism. 

The informational assumptions underlying our analysis are crucial, and the amount of information available to the agents plays an important role for the tradeoff between simplicity and expressiveness. 
In the sponsored search setting, both the existence of a zero-revenue equilibrium in the expressive mechanism, and the existence of the desirable equilibrium in the simplification, rely on the assumption of complete information. In other words, agents can on one hand use information about each others' types to coordinate and harm the auctioneer, but on the other hand, the same information guarantees that simplified mechanisms retain the desirable equilibria of the expressive mechanism. 
In combinatorial auctions the contrast is equally stark: while bids on every single package may be required to sustain an efficient equilibrium in the incomplete information setting, we show that such an equilibrium can be obtained with a number of bids that is quadratic in the number of agents, and potentially exponentially smaller than the number of bundles, given that agents have complete information. 

\paragraph{Related Work}
Several authors have criticized the revelation principle because it does not take computational aspects of mechanisms into account. In this context, \citet{CoSa03} consider sequential mechanisms that reduce the amount of communication, and non-truthful mechanisms that shift the computational burden of executing the mechanism, and the potential loss when it is executed suboptimally, from the designer to the agents. 
\citet{HyBo07a,HyBo07b} propose to circumvent computational problems associated with direct type revelation via the automated design of partial-revelation mechanisms, and in particular study approximately incentive compatible 
mechanisms that do not make any assumptions about agents' preferences. This approach is very general, but also hard to analyze theoretically,
with complex, regret-based algorithms.

\citet{BNS07} and \citet{FeBl06} consider settings with one-dimensional types and ask how much welfare and revenue can be achieved by mechanisms with a bounded message space. By contrast, we study mechanisms with message spaces that grow in some parameter of the problem and may even have infinite size, and obtain results both for one-dimensional and multi-dimensional types. 

A different notion of simplicity of a mechanism was considered by \citet{BaRo10a}: the authors show that among all payment rules that guarantee an efficient equilibrium when ranking agents according to their bids, the GSP payment rule is optimally simple in the sense that prices depend on bids in a minimal way. 

\citet{SSOA08a} study the tradeoff between simplicity and revenue in the context of pricing rules for communication networks and define the ``price of simplicity'' as the ratio between the revenue of a very simple pricing rule and the maximum revenue that can be obtained. 



\section{Preliminaries}

A mechanism design problem is given by a set~$N=\{1,2,\dots,n\}$ of \emph{agents} that interact to select an element from a set~$\Omega$ of \emph{outcomes}.  Agent $i\in N$ is associated with a \emph{type}~$\theta_i$ from a set $\Theta_i$ of possible types, representing private information held by this agent.  We write $\theta=(\theta_1,\theta_2,\dots,\theta_n)$ for a profile of types for the different agents, $\Theta=\prod_{i\in N}\Theta_i$ for the set of possible type profiles, and $\theta_{-i}\in\Theta_{-i}$ for a profile of types for all agents but~$i$. 
Each agent $i\in N$ further employs preferences over~$\Omega$, represented by a \emph{valuation function} $v_i:\Omega\times\Theta_i\rightarrow\reals$. 
The quality of an outcome $o\in\Omega$ is typically measured in terms of its social welfare, which is defined as the sum $\sum_{i\in N} v_i(o,\theta_i)$ of agents' valuations. An outcome that maximizes social welfare is also called \emph{efficient}. 

A \emph{mechanism} is given by a tuple $(N,X,f,p)$, where $X=\prod_{i\in N} X_i$ is a set of \emph{message profiles}, $f:X\rightarrow\Omega$ is a \emph{social choice function}, and $p:X\rightarrow\reals^n$ is a \emph{payment function}.  In this paper, we mostly restrict our attention to direct mechanisms, \ie mechanisms where $X_i\subseteq\Theta_i$ for every $i\in N$. A direct mechanism $(N,X,f,p)$ with $X=\Theta$ is called \emph{efficient} if for every $\theta\in\Theta$, $f(\theta)$ is an efficient outcome. 
Just as for type profiles, we write $x_{-i}\in X_{-i}$ for a profile of messages by all agents but~$i$.  
We assume quasilinear preferences, \ie the utility of agent~$i$ given a message profile $x\in X$ is $u_i(x,\theta_i)=v_i(f(x),\theta_i)-p_i(x)$.  
The \emph{revenue} achieved by mechanism $(N,X,f,p)$ for a message profile $x\in X$ is $\sum_{i\in N}p_i(x)$. 

Game-theoretic reasoning is used to analyze how agents interact with a mechanism, a desirable criterion being stability according to some game-theoretic solution concept.  We consider two different settings. In the \emph{complete information} setting, agents are assumed to know the type of every other agent.  A strategy of agent~$i$ in this setting is a function $s_i:\Theta\rightarrow X_i$.  In the \emph{(strict) incomplete information} setting, agents have no information, not even distributional, about the types of the other agents.  A strategy of agent~$i$ in this setting thus becomes a function $s_i:\Theta_i\rightarrow X_i$. 

The two most common solution concepts in the complete information setting are dominant strategy equilibrium and Nash equilibrium.  A strategy $s_i:\Theta\rightarrow X_i$ is a \emph{dominant strategy} if for every $\theta\in\Theta$, every $x_{-i}\in X_{-i}$, and every $x_i\in X_i$, 
\[
u_i((s_i(\theta),x_{-i}),\theta_i)\geq u_i((x_i,x_{-i}),\theta_i).
\]
A profile $s\in\prod_{i\in N}s_i$ of strategies $s_i:\Theta\rightarrow X_i$ is a \emph{Nash equilibrium} if for every $\theta\in\Theta$, every $i\in N$, and every \text{$s_i':\Theta\rightarrow X_i$,} 
\[
u_i((s_i(\theta),s_{-i}(\theta)),\theta_i)\geq u_i((s_i'(\theta),s_{-i}(\theta)), \theta_i).
\]
The existence of a dominant strategy $s_i:\Theta\rightarrow X_i$ always implies the existence of a dominant strategy $s'_i:\Theta_i\rightarrow X_i$ that does not depend on the types of other agents.  The solution concept of dominant strategy equilibrium thus carries over directly to the incomplete information setting. Formally, a strategy $s_i:\Theta_i\rightarrow X_i$ is a \emph{dominant strategy} in the incomplete information setting if for every $\theta_i\in\Theta_i$, every $x_{-i}\in X_{-i}$, and every $x_i\in X_i$, 
\[
u_i((s_i(\theta_i),x_{-i}),\theta_i)\geq u_i((x_i,x_{-i}),\theta_i).
\]
The appropriate variant of the Nash equilibrium concept in that setting is that of an ex-post equilibrium.  A profile $s\in\prod_{i\in N}s_i$ of strategies $s_i:\Theta_i\rightarrow X_i$ is an \emph{ex-post equilibrium} if for every $\theta\in\Theta$, every $i\in N$, and every $s_i':\Theta_i\rightarrow X_i$, 
\[
u_i((s_i(\theta_i),s_{-i}(\theta_{-i})),\theta_i)\geq u_i((s_i'(\theta_i),s_{-i}(\theta_{-i})),\theta_i).
\]

We conclude this section with a direct mechanism due to \citet{Vick61a}, \citet{Clar71a}, and \citet{Grov73a}.  This mechanism starts from an efficient social choice function~$f$ and computes each agent's payment according to the total value of the other agents, thus aligning his interests with that of society.  Formally, mechanism $(N,X,f,p)$ is called Vickrey-Clarke-Groves (VCG) mechanism\footnote{Actually, we consider a specific member of a whole family of VCG mechanisms, namely the one that uses the \citeauthor{Clar71a} pivot rule.} if $X=\Theta$,~$f$ is efficient, and 
\[
p_i(\theta) = \max_{o\in\Omega} \sum_{j\neq i} v_j(o,\theta_j) - \sum_{j\neq i} v_j(f(\theta),\theta_j).
\]
In the VCG mechanism, revealing types $\theta\in\Theta$ truthfully is a dominant strategy equilibrium~\citep{Grov73a}. We will refer to the resulting outcome as the VCG outcome for~$\theta$, and write $R(\theta)$ for the revenue obtained in this outcome. 

\section{Simplifications} 
\label{sec:simplifications}

Our main object of study in this paper are simplifications of a mechanism obtained by restricting its message space.  Consider a mechanism $M=(N,X,f,p)$.  A mechanism $\hM=(N,\hX,\hf,\hp)$ will be called a \emph{simplification} of~$M$ if $\hX\subseteq X$, $\hf|_{\hX}=f|_{\hX}$, and $\hp|_{\hX}=p|_{\hX}$.  

We will naturally be interested in the set of outcomes that can be obtained in equilibrium, both in the original mechanism~$M$ and the simplified mechanism~$\hM$.\footnote{In the following, we will simply talk about equilibria without making a distinction between the different equilibrium notions.  Unless explicitly noted otherwise, our results concern Nash equilibria in the complete information case and ex-post equilibria in the incomplete information case.}  
\citet{Milg10a} defines a property he calls tightness, which requires that the simplification does not introduce any additional equilibria.  More formally,  simplification $\hM$ will be called \emph{tight} if every equilibrium of $\hM$ is an equilibrium of $M$.\footnote{\citet{Milg10a} considers a slightly stronger notion of tightness defined with respect to (pure-strategy) $\epsilon$-Nash equilibria.}  Tightness ensures that the simplified mechanism is at least as good as the original one with respect to the worst outcome obtained in any equilibrium.  It does not by itself protect good equilibrium outcomes, and we will in fact see examples of tight simplifications that eliminate \emph{all} ex-post equilibria. 

A property that will be useful in the following is a variant of \citeauthor{Milg10a}'s outcome closure for exact equilibria. It requires that for every agent, and for every choice of messages from the restricted message sets of the other agents, it is optimal for the agent to choose a message from his restricted message set. 
More formally, a simplification $(N,\hX,\hf,\hp)$ of a mechanism $(N,X,f,p)$ satisfies \emph{outcome closure} if for every $\theta\in\Theta$, every $i\in N$, every $\hx_{-i}\in\hX_{-i}$, and every $x_i\in X_i$ there exists $\hx_{i}\in\hX_i$ such that $u_i((\hx_i,\hx_{-i}),\theta_i) \geq u_i((x_i,\hx_{-i}),\theta_i)$. 

This turns out to be sufficient for tightness in both the complete and incomplete information case.\footnote{\citeauthor{Milg10a} stated the result only for complete information, but the proof goes through for incomplete information as well.} 
\begin{proposition}[\citet{Milg10a}] \label{prop:tightness}
Every simplification that satisfies outcome closure is tight with respect to both Nash and ex-post Nash equilibria. 
\end{proposition}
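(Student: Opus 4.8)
The plan is a direct deviation argument. Let $s$ be an equilibrium of $\hM$; I will show that no agent can profitably deviate from $s$ in $M$, so that $s$ is also an equilibrium of $M$. I treat the complete-information (Nash) case explicitly; the incomplete-information (ex-post) case is obtained by the verbatim substitution of $s_i(\theta_i)$ for $s_i(\theta)$ and $s_{-i}(\theta_{-i})$ for $s_{-i}(\theta)$, so it needs no separate treatment, matching the footnote's claim that Milgrom's argument carries over.

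The one preliminary point is that the two mechanisms agree on the restricted message space: since $\hf|_{\hX}=f|_{\hX}$ and $\hp|_{\hX}=p|_{\hX}$, we have $\hu_i(x,\theta_i)=u_i(x,\theta_i)$ for every $x\in\hX$ and $\theta_i\in\Theta_i$, where $\hu_i$ denotes utility under $\hM$. Now fix $\theta\in\Theta$, an agent $i\in N$, and an arbitrary message $x_i\in X_i$; because the Nash condition is equivalent to its pointwise-in-$\theta$ form (as $s_i'$ ranges over all maps $\Theta\to X_i$, its value $s_i'(\theta)$ ranges over all of $X_i$), it suffices to show $u_i((s_i(\theta),s_{-i}(\theta)),\theta_i)\geq u_i((x_i,s_{-i}(\theta)),\theta_i)$. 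Set $\hx_{-i}=s_{-i}(\theta)$, which lies in $\hX_{-i}$. Outcome closure applied to $\theta$, $i$, $\hx_{-i}$, and $x_i$ produces some $\hx_i\in\hX_i$ with $u_i((\hx_i,\hx_{-i}),\theta_i)\geq u_i((x_i,\hx_{-i}),\theta_i)$. Since $\hx_i\in\hX_i$ and $s$ is an equilibrium of $\hM$, the equilibrium condition for $\hM$ together with the identification $\hu_i=u_i$ on $\hX$ gives $u_i((s_i(\theta),\hx_{-i}),\theta_i)\geq u_i((\hx_i,\hx_{-i}),\theta_i)$. Concatenating these inequalities and substituting back $\hx_{-i}=s_{-i}(\theta)$ yields the desired inequality; as $i$, $\theta$, and $x_i$ were arbitrary, $s$ is a Nash equilibrium of $M$.

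There is essentially no hard step here; the only things requiring care are bookkeeping. One must (i) justify reducing the equilibrium notions, stated in terms of strategies $s_i:\Theta\to X_i$ (resp.\ $s_i:\Theta_i\to X_i$), to inequalities at a fixed type profile, so that the pointwise guarantee of outcome closure applies; and (ii) keep straight which message profiles belong to $\hX$ rather than merely to $X$, so that the identification $\hu_i=u_i$ is legitimate at each place it is invoked. Both are immediate from the definitions in \secref{sec:simplifications} and the preliminaries, and nothing in the argument distinguishes the complete- from the incomplete-information setting beyond the notational substitution noted above.
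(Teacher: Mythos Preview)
Your proof is correct and follows essentially the same approach as the paper's: both use outcome closure to replace an arbitrary deviation in $M$ by one inside $\hX_i$, and then invoke the equilibrium property of $\hM$ to conclude. The only cosmetic difference is that the paper phrases this as a proof by contradiction on a fixed message profile $\hx$, whereas you give the direct inequality chain and are somewhat more explicit about the strategy-versus-message bookkeeping and the identification $\hu_i=u_i$ on $\hX$.
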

\begin{proof}
Fix $\theta\in\Theta$.  Consider a mechanism $M=(N,X,f,p)$, a simplification $\hM=(N,\hX,\hf,\hp)$ that satisfies outcome closure, and a Nash equilibrium $\hx$ of $\hM$.  Assume for contradiction that $\hx$ is not a Nash equilibrium of $M$. Then, for some $i\in N$, there exists $x'_i \in X_i$ such that $u_i((x'_i,\hx_{-i}),\theta_i) > u_i(\hx,\theta_i)$.  Since $\hM$ satisfies outcome closure, there further exists $\hx'_i\in\hX_i$ such that $u_i((\hx'_i,\hx_{-i}),\theta_i) \ge u_i((x'_i,\hx_{-i}),\theta_i)$.  It follows that $u_i((\hx'_i,\hx_{-i}),\theta_i) > u_i(\hx,\theta_i)$, which contradicts the assumption that $\hat{x}$ is a Nash equilibrium of $\hM$. 
\end{proof}

One way to guarantee good behavior in the \emph{best case} is by requiring that a simplification $\hM$ preserves all equilibria of the original mechanism $M$, in the sense that for every Nash equilibrium of $M$, there exists an equilibrium of $\hM$ that yields the same outcome and payments. 
We will call a simplification satisfying this property \emph{total}, and will return to total mechanisms in \secref{sec:information}. To the end of equilibrium selection totality clearly needs to be relaxed, by requiring that only certain desirable outcomes are preserved. A typical desirable outcome in many settings is the VCG outcome. For example, although this outcome has some shortcomings in fully general combinatorial auction domains~\cite{AuMi06a}, it remains of significant interest in settings with unit-demand
preferences, such as sponsored search. We will call simplification~$\hM$ \emph{Vickrey-preserving} if for every $\theta\in\Theta$, it has an equilibrium that yields the VCG outcome for~$\theta$.

\section{Sponsored Search Auctions} 
\label{sec:ssa}

In sponsored search~\citep[see, \eg][]{LPSV07a}, the agents compete for elements of a set $S=\{1,\dots,k\}$ of slots, where $k\leq n$.  Each outcome corresponds to a one-to-one assignment of agents to slots, \ie $\Omega\subseteq\{1,\dots,n\}^N$ such that $o_i\neq o_j$ for all $o\in\Omega$ and $i,j\in\{1,\dots,n\}$ with $i\neq j$.  We will assume that $v(o,\theta_i)=0$ if $o_i>k$, and that there are no externalities, \ie $v_i(o,\theta_i)=v_i(o',\theta_i)$ if $o_i=o'_i$.  In slight abuse of notation, we will write $v_i(j,\theta_i)$ for the valuation of agent $i$ for slot $j$. 

We consider simplifications of two mechanisms, the Vickrey-Clarke-Groves (VCG) mechanism and the Generalized Second Price (GSP) mechanism, and analyze their behavior for different spaces of type profiles, which we denote by $\Theta$, $\Theta^>$, and $\Theta^\alpha$. In $\Theta$, valuations can be arbitrary non-negative numbers. 
$\Theta^>$ adds the restriction that valuations are strictly decreasing, \ie $v_i(j,\theta_i)>v_i(j+1,\theta_i)$ for every $\theta\in\Theta^>$, $i\in N$, and $j\in\{1,\dots,k-1\}$.  Valuations in $\Theta^\alpha$ are assumed to arise from ``clicks'' associated with each slot and a valuation per click. In other words, there exists a fixed click-through rate vector $\alpha\in\ctrf=\{\alpha'\in\reals^n \midd \text{$\alpha'_i>\alpha_j$ if $i<j$}\}$, which may or may not be known to the mechanism, and $v_i(j,\theta_i)=\alpha_j\cdot v_i(\theta_i)$ for some $v_i(\theta_i)\in\reals_{\geq 0}$.  Thus, $\alpha_j=0$ if $j>k$, and it will be convenient to assume that $\alpha_1=1$. We finally define $\Psi=\bigcup_{\alpha\in\ctrf}\Theta^\alpha$, and observe that $\Psi\subset\Theta^>\subset\Theta$. 

The message an agent $i\in N$ submits to the mechanisms thus corresponds to a vector of bids $x_{i,j}\in\reals$ for slots $j\in S$. Given a message profile $x\in X$, the VCG mechanism assigns each agent $i$ a slot $f_i(x)=o_i$ so as to maximize $\sum_i x_{i,o_i}$, and charges that agent $p_i(x)=\max_{o' \in \Omega} \sum_{j \neq i} x_{j,o'_j} - \sum_{j \neq i} x_{j,o_j}$. The GSP mechanism is defined via a sequence of second-price auctions for slots $1$ through $k$: slot $j$ is assigned to an agent $i$ with a maximum bid for that slot at a price equal to the second highest bid, both with respect to the set of agents who have not yet been assigned a slot, \ie $f_i(x)=o_i$ such that $x_{i,j} = \max_{i'\in N: o_{i'} \ge j} x_{i',j}$ and $p_i(x)=\max_{i'\in N: o_{i'} > j} x_{i,j}$.

\subsection{Envy-Freeness and Efficiency} 

The original analysis of GSP due to \citet{EOS07a} and \citet{Vari07a} focuses on equilibria that are ``locally envy-free.''\footnote{\citeauthor{Vari07a} calls these equilibria ``symmetric equilibria.''} Assume that $\theta\in\Theta^\alpha$, and consider an outcome in which agent~$i$ is assigned to slot~$i$, for all $i\in N$. Such an outcome is called locally envy-free if, in addition to being a Nash equilibrium, no agent could increase his utility by exchanging bids with the agent assigned the slot directly above him, \ie if for every $i\in\{2,\dots,n\}$, $\alpha_i\cdot v_i - p_i \geq \alpha_{i-1}\cdot v_i - p_{i-1}$. Restricting attention to envy-free equilibria immediately solves all revenue problems: as \citeauthor{EOS07a} point out, revenue in any locally envy-free equilibrium is at least as high as that in the dominant-strategy equilibrium of the VCG auction. Conversely, \citeauthor{Milg10a}'s observation concerning zero-revenue equilibria contains an implicit critique of the assumptions underlying the restriction to equilibria that are envy-free. It will be instructive to make this critique explicit. 

\citeauthor{EOS07a} argue that an equilibrium where some agent~$i$ envies some other agent~$j$ assigned the next higher slot is not a reasonable rest point of the bidding process, because agent~$i$ might increase the price paid by agent~$j$ without the danger of harming his own utility should~$j$ retaliate. There are two problems with this line of reasoning. First, it is not clear why agent~$j$ should retaliate, especially if he is worse off by doing so. Second, agent~$i$ might in fact have a very good reason \emph{not} to increase the price paid by~$j$, like a desire to keep prices low in the long run through tacit collusion. 

We consider a weaker refinement of Nash equilibrium, by asking under what conditions there exists a bid~$b_i$ for agent~$i$ such that (i)~agent~$j$ is forced out of the higher slot, in the sense that it becomes a better response for~$j$ to underbid~$i$, and (ii)~agent~$i$ is strictly better off after this response by $j$ than at present. This is the case exactly when 
\[
	\alpha_i \cdot v_j - p_i > \alpha_j \cdot v_j - b_i 
	\qquad\text{and}\qquad
	\alpha_j\cdot v_i - b_i > \alpha_i\cdot v_i - p_i,
\]
	where $p_i$ is the price currently paid by agent~$i$. The second inequality assumes that $j$ will respond by bidding just below the
bid $b_i$ of agent $i$, such that this becomes $i$'s new price. Rewriting, 
this deviation is possible when
\[
b_i > (\alpha_j-\alpha_i)\cdot v_j + p_i \qquad\text{and}\qquad b_i < (\alpha_j-\alpha_i)\cdot v_i +p_i.
\]
	Clearly, a bid $b_i$ with this property exists if and only if $v_i>v_j$. In turn, agents~$i$ and~$j$ such that $i=j+1$ and $v_i>v_j$ exist if and only if the current assignment is inefficient. 
	
	This provides a very strong argument against inefficient equilibria as rest points of the bidding process, much stronger than the argument against equilibria that are not envy-free. In the context of sponsored search auctions we will therefore restrict our attention to efficient equilibria. It is worth noting at this point that the set of efficient equilibria forms a strict superset of the set of locally envy-free equilibria, and in particular contains the zero-revenue equilibria identified by \citeauthor{Milg10a} and discussed next. 

\subsection{Comments on \citeauthor{Milg10a}'s Analysis}

\citet{Milg10a} observed that for every profile of agent types, both VCG and GSP have a Nash equilibrium that yields zero revenue, and that this equilibrium in fact becomes the unique equilibrium if there is a small cost associated with submitting a positive bid. To alleviate this fact, he proposed to restrict the message space of both VCG and GSP to $\hX=\{(\alpha_1\cdot b_i,\dots,\alpha_k\cdot b_i)\midd b_i\in\reals_{\ge 0}\}$ for some $\alpha\in\ctrf$. 

The following proposition summarizes our knowledge about the resulting simplifications, which we will refer to as \aVCG and \aGSP. Most of these observations were already proved or at least claimed by \citeauthor{Milg10a}, but a proof of the proposition is given in \appref{app:ssa} for the sake of completeness. 
\begin{proposition} \label{prop:ssa}
	Let $\alpha\in\ctrf$.  Then, \aGSP and \aVCG are tight on $\Theta$, have positive revenue on $\Theta^>$ if $n,k\geq 2$, and are Vickrey-preserving on $\Theta^{\alpha}$. 
\end{proposition}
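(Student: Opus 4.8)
The plan is to establish the three assertions separately, as they live in different regimes. \emph{Tightness on $\Theta$.} By \propref{prop:tightness} it suffices to verify outcome closure for \aGSP and \aVCG. Fix a type profile $\theta$, an agent $i$, and a profile $\hx_{-i}$ of proportional bids by the other agents; write $b_{(1)}\ge b_{(2)}\ge\cdots$ for the underlying scalars sorted decreasingly, and let $x_i\in X_i$ be an arbitrary bid vector placing $i$ in slot $j^*$ at price $P^*$. The crucial observation is that, when the competitors bid proportionally, the price charged to $i$ for any fixed slot $j$ does not depend on how $i$ bids, only on the $b_{(m)}$: under GSP it is $\alpha_j b_{(j)}$ (the highest competing bid on slot $j$ among the agents pushed strictly below $i$), and under VCG it is $\sum_{m\ge j}(\alpha_m-\alpha_{m+1})b_{(m)}$, because once $i$'s slot is fixed the remaining agents are always matched to the remaining slots in sorted order, so the externality depends on the $b_{(m)}$ alone. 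Hence $P^*$ equals exactly the price a proportional bid landing $i$ in slot $j^*$ would incur, so choosing $\hx_i\in\hX_i$ to put $i$ in slot $j^*$ recovers the same utility, and \propref{prop:tightness} yields tightness for Nash and ex-post equilibria. The one wrinkle is that when several $b_{(m)}$ coincide, slot $j^*$ may not be attainable with strict inequality; this is handled by fixing a tie-breaking rule consistent across slots, or, as Milgrom does, by stating tightness for $\epsilon$-equilibria.

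\emph{Positive revenue on $\Theta^>$ when $n,k\ge2$.} Here I would argue by contradiction. If some equilibrium has zero revenue then, in particular, the price of slot $1$ vanishes; since $\alpha_1=1$, that price is the second-highest submitted scalar, so at most one agent submits a positive bid and at least $n-1\ge1$ agents bid zero. Using $k\ge2$, some slot is occupied by a zero-bidding agent at price zero, and strict monotonicity $v_i(j)>v_i(j+1)$ then hands some agent — the occupant of that slot, or (when $n>k$) an agent currently winning nothing — a profitable deviation, since a small positive bid promotes him to a strictly more valuable slot whose price is still zero. The same reasoning applies to \aVCG, whose slot-$1$ price is likewise the second-highest scalar. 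I expect to have to be a little careful with ties and with the borderline case $n=k$ in pinning down the exact hypotheses; if instead the intended reading is ``some equilibrium has positive revenue'', the alternative is to exhibit a positive-revenue equilibrium directly.

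\emph{Vickrey-preservingness on $\Theta^\alpha$.} The two mechanisms are handled differently. For \aVCG it is immediate: when $\theta\in\Theta^\alpha$ the true valuation vector $(\alpha_1 v_i,\dots,\alpha_k v_i)$ of each agent lies in $\hX_i$, so reporting the scalar $v_i$ is an available message, and since reporting one's valuation is a dominant strategy in the fully expressive VCG mechanism it remains dominant under the restriction; the induced outcome is by definition the VCG outcome. For \aGSP I would exhibit the ``lowest envy-free'' bid profile in the spirit of \citet{EOS07a} and \citet{Vari07a}: order the agents by decreasing $v_i$ and set the scalars recursively so that the GSP price $\alpha_j b_{(j+1)}$ for slot $j$ coincides with the VCG payment $\sum_{m\ge j}(\alpha_m-\alpha_{m+1})v_{(m+1)}$ of that slot, i.e.\ $b_{(j+1)}=\alpha_j^{-1}\sum_{m\ge j}(\alpha_m-\alpha_{m+1})v_{(m+1)}$, with $b_{(1)}$ any sufficiently large value. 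One then checks that these scalars are (weakly) decreasing — which reduces to $b_{(j+2)}\le v_{(j+1)}$, a consequence of $\alpha$ being decreasing — so the profile is a legitimate message profile realizing the VCG assignment and payments, and that no agent profits from moving to another slot, the familiar envy-free-equilibrium verification.

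\emph{Main obstacle.} The substantive work is (i) the outcome-closure check for \aVCG, namely isolating that the externality charged to $i$ depends only on the competitors' proportional bids once $i$'s slot is fixed, and (ii) the deviation-freeness verification for the \aGSP profile in the last part; the handling of ties and of the corner case $n=k$ in the revenue statement is routine but unavoidable bookkeeping.
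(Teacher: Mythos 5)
Your proposal is correct and follows essentially the same route as the paper's proof: outcome closure via matching the slot with a proportional bid (payments depending only on $\hx_{-i}$), a contradiction from a zero-bidder's profitable deviation to slot $2$ for positive revenue, truthful reporting for \aVCG, and the recursively defined bids $b_{i}=\alpha_{i-1}^{-1}p_{i-1}$ verified by envy-freeness for \aGSP. The only nit is that the \aVCG slot-$1$ price is $\sum_j(\alpha_j-\alpha_{j+1})b_{(j+1)}$ rather than the second-highest scalar, but since $\alpha_1>\alpha_2$ its vanishing still forces $b_{(2)}=0$, so your argument goes through unchanged.
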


Assuming that the click-through rate vector $\alpha$ is known, both \aVCG and \aGSP look very appealing: they eliminate all zero-revenue equilibria, without affecting the truthful equilibrium and without introducing any new equilibria. 

In practice, however, the relevant click-through rate may not be known. A somewhat more realistic model assumes a certain degree of heterogeneity among the population generating the clicks.  More precisely, a certain fraction of this population is assumed to be ``merely curious,'' such that clicks by this part of the population do not generate any value for the agents.  This introduces an information asymmetry, where the mechanism observes the overall click-through rate vector $\alpha$, while agents derive value from a different click-through rate vector $\beta$.\footnote{As \citeauthor{Milg10a} points out, even the most sophisticated machine learning techniques might not be able to predict the click-through rates from which agents actually derive value.}  In the following we will assume that $\beta$ is the same for all agents, and that $\beta$ is again normalized such that $\beta_1=1$ and $\beta_j=0$ if $j>k$. 

For a slight variation of our model, in which there is a small dependence between $\alpha$ and $\beta$,\footnote{\citeauthor{Milg10a} assumes that there is a fraction $\lambda$ of shoppers with click-through rate vector $\alpha$ and a fraction $(1-\lambda)$ of curious searchers with click-through rate vector $\beta$.  The click-through rate vector $\gamma$ observed by the search provider is then given by $\gamma_j=\lambda \cdot \alpha_j+(1-\lambda)\cdot \beta_j$.} \citeauthor{Milg10a} established a separation between GSP and VCG: \aGSP retains the VCG outcome while \aVCG fails. 
We obtain an analogous observation in our model. First, $\alpha$-GSP is Vickrey-preserving, as we prove in \appref{app:milgrom}. 
\begin{proposition} \label{prop:milgrom}
	Let $\alpha,\beta\in\ctrf$.  Then, \aGSP is Vickrey-preserving on $\Theta^\beta$ if and only if the sequence $\{p_j(\theta)/\alpha_j\}_{j=1,\dots,k}$, where $p_j(\theta)=\sum_{i=j}^k (v_{i+1}(\theta_{i+1})\cdot(\beta_i-\beta_{i+1}))$, is decreasing.
\end{proposition}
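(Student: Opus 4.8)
The plan is to argue one type profile $\theta\in\Theta^\beta$ at a time. Relabel agents so that $v_1(\theta_1)\ge\dots\ge v_n(\theta_n)$ and abbreviate $v_i:=v_i(\theta_i)$, $p_j:=p_j(\theta)$. I would first note two preliminary facts: the VCG outcome for $\theta$ assigns agent $i$ to slot $i$, and a short telescoping computation from the VCG payment formula identifies the VCG payment of the agent in slot $j$ with $p_j=\sum_{i=j}^{k}v_{i+1}(\beta_i-\beta_{i+1})$ (using $v_{k+1}:=0$ if $n=k$). Since a strategy profile in \aGSP is just a bid vector $b=(b_1,\dots,b_n)\in\reals_{\ge 0}^n$, with agent $i$ effectively submitting $(\alpha_1 b_i,\dots,\alpha_k b_i)$, being Vickrey-preserving on $\Theta^\beta$ means precisely that for every such $\theta$ there is a Nash equilibrium $b$ of \aGSP under which agent $i$ receives slot $i$ and pays $p_i$.

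Next I would characterise the bid vectors that induce the VCG outcome. A vector $b$ produces the efficient assignment exactly when $b_1\ge b_2\ge\dots\ge b_n$ (with a consistent tie-breaking rule), and, given that assignment, GSP charges the agent in slot $i$ the amount $\alpha_i b_{i+1}$; hence the payments come out right iff $b_{i+1}=p_i/\alpha_i$ for $i=1,\dots,k$. These two requirements can be met simultaneously if and only if $p_1/\alpha_1\ge p_2/\alpha_2\ge\dots\ge p_k/\alpha_k$, which is the asserted monotonicity. This already yields the ``only if'' direction: if $\{p_j/\alpha_j\}$ is not decreasing then no bid vector --- hence a fortiori no equilibrium --- of \aGSP yields the VCG outcome for $\theta$, so \aGSP is not Vickrey-preserving.

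For the ``if'' direction I would assume $\{p_j/\alpha_j\}$ is decreasing, set $b^*_{i+1}:=p_i/\alpha_i$ for $i=1,\dots,k$, $b^*_1:=p_1/\alpha_1+1$, and $b^*_l:=0$ for $l>k+1$, observe that $b^*$ induces the VCG outcome by the previous step, and then check that $b^*$ is a Nash equilibrium. The useful lemma here is that the VCG outcome is ``envy-free'': for every agent $i$ and every slot $m$, the payoff $i$ gets in its current slot (that is, $\beta_i v_i-p_i$ if $i\le k$, and $0$ if $i>k$) is at least $\beta_m v_i-p_m$, with the convention $\beta_m=p_m=0$ for $m>k$; this follows by writing the relevant difference as $\sum_l(\beta_l-\beta_{l+1})(v_{l+1}-v_i)$ over the appropriate range of $l$ and using that the $v$'s are sorted. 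Now for any deviation of agent $i$ that lands him in slot $m$, re-sorting $b^*$ shows that the highest surviving bid below slot $m$ is $b^*_m$ if $m$ is above $i$'s current slot and $b^*_{m+1}$ if it is below, so the GSP price $q_m$ that $i$ would pay satisfies $q_m=\alpha_m b^*_{m+1}=p_m$ for a downward move, $q_m=\alpha_m b^*_m=(\alpha_m/\alpha_{m-1})\,p_{m-1}\ge p_m$ for an upward move to a slot $m\ge 2$ (this is the one place the decreasing condition is used), and $q_m=\alpha_1 b^*_1>p_1$ for an upward move to slot $1$. In each case $q_m\ge p_m$, so the deviation payoff $\beta_m v_i-q_m\le\beta_m v_i-p_m$ is, by the envy-freeness bound, at most $i$'s current payoff; hence $b^*$ is an equilibrium inducing the VCG outcome.

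The step I expect to be the main obstacle is the equilibrium verification: identifying, for each agent and each target slot, which bid becomes price-setting after GSP re-sorts --- and hence pinning down $q_m$ --- together with the routine checks that ties among the $v_i$, the boundary slot $1$, and degenerate cases like $n=k$ are harmless. Conceptually, the monotonicity of $\{p_j/\alpha_j\}$ is forced in exactly two places: it is what lets $b^*$ produce the efficient assignment at all, and what keeps upward deviations into the interior slots unprofitable, whereas downward deviations and dropping out are stable unconditionally, simply because agents are ordered by their per-click valuations.
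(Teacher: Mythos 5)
Your proof is correct and follows essentially the same route as the paper's: both identify the (essentially unique) bid profile $b_{i+1}=p_i/\alpha_i$ forced by efficiency plus the Vickrey prices, derive the ``only if'' direction from the resulting sort-order requirement, and verify the ``if'' direction by bounding each deviation price from below by the corresponding VCG price and invoking envy-freeness of the VCG outcome. The only differences are cosmetic (your choice of $b^*_1$ and your more explicit case analysis of which bid becomes price-setting after a deviation).
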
 

Second, we establish in \appref{app:counterexample} that \aVCG is not Vickrey-preserving. But this line of reasoning seems a bit problematic for two reasons. First, there seems no reason to believe that the condition relating prices (and thus bids) and click-through rates in \propref{prop:milgrom} would be satisfied in practice. Second, the above discussion only shows superiority of GSP over VCG with respect to a \emph{particular} simplification, and it might well be the case that there exists a different simplification of VCG mechanism with comparable or even better properties. 


An additional observation that we make, in regard to the ability of these
simplifications to eliminate zero-revenue equilibria, is that there exist type profiles for which the minimum equilibrium revenue can be \emph{arbitrarily small} compared to the revenue obtained in the VCG outcome.
\begin{theorem} \label{thm:epsilon}
	Let $\epsilon,r>0$.  Then there exist $\alpha\in\ctrf$ and $\theta\in\Theta^\alpha$ such that $R(\theta)\geq r$ and \aVCG has an equilibrium with revenue at most $\epsilon$.  Similarly, there exist $\alpha\in\ctrf$ and $\theta\in\Theta^\alpha$ such that $R(\theta)\geq r$ and \aGSP has an equilibrium with revenue at most $\epsilon$. 
\end{theorem}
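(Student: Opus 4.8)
The plan is to exhibit, for given $\epsilon,r>0$, a single instance with $n=k=2$ that witnesses both halves of the statement at once; the point is that restricting agents to scalar bids still leaves intact the familiar low‑revenue equilibria of a second‑price auction, in which whoever loses the competition for the top slot bids essentially nothing. Fix $\alpha=(1,\tfrac12)\in\ctrf$ and let $\theta\in\Theta^\alpha$ be given by per‑click values $v_1>v_2>0$, with $v_2$ chosen so large that $v_2(\alpha_1-\alpha_2)=v_2/2\ge r$ (take $v_1=v_2+1$, say). In the VCG outcome for~$\theta$, agent~$1$ is assigned slot~$1$ and pays the externality $v_2(\alpha_1-\alpha_2)$ it imposes on agent~$2$ (who rises to slot~$1$ when agent~$1$ is deleted), while agent~$2$ takes slot~$2$ and pays nothing; hence $R(\theta)=v_2(\alpha_1-\alpha_2)\ge r$.

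Next I would specify the low‑revenue equilibria. For \aVCG a message profile is a pair of scalar bids $b=(b_1,b_2)$, agent~$i$ submitting $(\alpha_1 b_i,\alpha_2 b_i)=(b_i,b_i/2)$; take $b_1$ large, say $b_1=2v_2$, and $b_2>0$ small. Since $b_1>b_2\ge 0$, agent~$1$ is assigned slot~$1$ and agent~$2$ slot~$2$, the pivot payments are $p_1=b_2(\alpha_1-\alpha_2)=b_2/2$ and $p_2=0$, and keeping $b_2$ small keeps the revenue $b_2/2$ below~$\epsilon$. For \aGSP keep the same~$\theta$ and take $b_1=v_2$ with $b_2>0$ small; the assignment is the same, agent~$1$ pays the second‑highest bid for slot~$1$, namely $\alpha_1 b_2=b_2$, and agent~$2$ pays~$0$, so the revenue $b_2$ is again below~$\epsilon$. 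In both cases what remains is to check that $b$ is a Nash equilibrium. Agent~$1$'s only outcome‑changing deviation is to undercut $b_2$ and drop to slot~$2$, yielding $\alpha_2 v_1=v_1/2$, worse than his current utility ($v_1-b_2/2$ for \aVCG, $v_1-b_2$ for \aGSP) once $b_2$ is small. Agent~$2$'s only outcome‑changing deviation is to outbid $b_1$ and climb to slot~$1$; the pivot price is then $b_1(\alpha_1-\alpha_2)=v_2$ in \aVCG and $\alpha_1 b_1=v_2$ in \aGSP, so in either case his utility becomes $\alpha_1 v_2-v_2=0<v_2/2=\alpha_2 v_2$. Hence $b$ is a Nash equilibrium with revenue at most~$\epsilon$ while $R(\theta)\ge r$, and the same instance serves for both mechanisms.

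The payment bookkeeping is routine; the one point that needs care — and the main obstacle — is confirming that these profiles are \emph{exact} Nash equilibria, \ie that \emph{no} deviation is profitable, not only the two singled out above. The deviation one might fear is the winner \emph{lowering} his bid, but in both mechanisms the winner's payment depends only on the bids ranked strictly below him, never on his own bid, so lowering it can only cost him the slot, which is strictly worse because his per‑click value dominates the negligible payment saved; raising the bid changes nothing. The loser's incentive to climb is neutralized by taking $b_1$ large, the top‑slot price being increasing in the winner's bid; with the constants above these equilibria are strict, so the tie‑breaking convention is irrelevant ($b_1\neq b_2$ throughout). Finally, working with $n=k=2$ loses nothing, since the theorem only asserts that \emph{some} instance exists.
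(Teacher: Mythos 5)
Your proposal is correct and follows essentially the same route as the paper's proof: exhibit an explicit instance in which the loser(s) bid close to zero, verify that this low-bid profile is a Nash equilibrium of the restricted mechanism because the winner will not drop down and the loser faces too high a price to climb up, while the truthful VCG revenue is made large by scaling the instance. The only (immaterial) differences are that you use $n=k=2$ rather than the paper's three agents and three slots, and you inflate the valuations rather than steepening the click-through rates to force $R(\theta)\geq r$.
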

\begin{proof}
	We consider a setting with three agents and three slots. The construction can easily be extended to an arbitrary number of agents and slots.
	
	For \aVCG, let $v_i(\theta_i) = r+1$ for all $i\in N$.  Let $\alpha_1=1$, $\alpha_2=1/(r+1)$, and $\alpha_3=1/(2r+2)$.  It is easily verified that the bids $b_1=r+1$ and $b_2=b_3=\epsilon$ form a Nash equilibrium of \aVCG.  Given these bids, \aVCG assigns slot~$1$ to agent~$1$ at price $\epsilon-\epsilon/(2r+2)$, and slots~$2$ and~$3$ to agents~$2$ and~$3$ at prices $\epsilon/(r+1)-\epsilon/(2r+2)$ and zero.  This yields revenue~$\epsilon$.  In the truthful equilibrium of the VCG mechanism, on the other hand, the price is~$r+\epsilon/(r+1)-\epsilon/(2r+2)$ for the first slot, $\epsilon/(r+1)-\epsilon/(2r+2)$ for the second slot, and zero for the third slot, for an overall revenue of $r+\epsilon/(r+2)$. 
	
	For \aGSP, again let $v_i(\theta_i)=r+1$ for all $i\in N$.  Let $\delta=\epsilon/(r+2)$, $\alpha_1=1$, $\alpha_2=(1+\delta)/(r+1)$, and $\alpha_3 = 1/(r+1)$.  It is easily verified that the bids $b_1=r+1$ and $b_2=b_3=\delta/(1+\delta)\cdot(r+1)$ form a Nash equilibrium of \aGSP.  Given these bids, \aGSP assigns slot~$1$ to agent~$1$ at price $\delta/(1+\delta)\cdot (r+1)$, and slots~$2$ and~$3$ to agents~$2$ and~$3$ at prices~$\delta$ and zero.  This yields revenue $\delta/(1+\delta)\cdot(r+1)+\delta\leq\epsilon$.  In the truthful equilibrium of the VCG mechanism, on the other hand, the price is~$r$ for the first slot, $\delta$ for the second slot, and zero for the third slot, for an overall revenue of $r+\delta$. 
\end{proof}

\subsection{A Sense in which GSP is Superior to VCG}

The above observations raise the following prominent question: \emph{does there exist a simplification that preserves the VCG outcome despite ignorance about the true click-through rates that affect bidders' values, and if so, can this simplification achieve improved revenue relative to the VCG outcome, in every equilibrium?} 

For GSP the answer is surprisingly simple: a closer look at the proofs of \propref{prop:ssa} and \propref{prop:milgrom} reveals that by ignoring the observed click-through rates $\alpha$, and setting $\alpha=\one=(1,\dots,1)$ instead, one obtains a simplification that is tight, guarantees positive revenue, and is Vickrey-preserving on \emph{all} of $\Psi$. This strengthens 
Proposition~\ref{prop:ssa} over the claims for \aGSP and \aVCG.
\begin{corollary} \label{cor:onegsp}
	\aGSP is tight on $\Psi$, has positive revenue on $\Psi$ if $n,k \ge 2$, and is Vickrey-preserving on $\Psi$, if and only if $\alpha=\one=(1,\dots,1)$.
\end{corollary}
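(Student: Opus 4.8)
The plan is to separate out the two easy properties and reduce everything to a statement about the Vickrey-preserving property, which can then be read off from \propref{prop:milgrom}. Since $\Psi\subset\Theta^>\subset\Theta$, \propref{prop:ssa} already yields tightness on $\Psi$ and, for $n,k\ge 2$, positive revenue on $\Psi$ for \emph{every} $\alpha\in\ctrf$; moreover, inspection of its proof shows that neither the outcome-closure argument behind tightness nor the revenue bound uses strict monotonicity of~$\alpha$ --- only that a message of agent~$i$ has the form $(\alpha_1 b_i,\dots,\alpha_k b_i)$ --- so both properties hold for $\alpha=\one$ as well. Hence, over the admissible vectors (the strictly decreasing ones together with $\one$), the conjunction of the three properties is equivalent to \aGSP being Vickrey-preserving on \emph{all} of $\Psi$, and it remains to show that this holds if and only if $\alpha=\one$.

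For the ``if'' direction I would fix $\alpha=\one$ and an arbitrary $\theta\in\Psi$, say $\theta\in\Theta^\beta$ with $\beta\in\ctrf$, and invoke \propref{prop:milgrom}: \oneGSP is Vickrey-preserving on $\Theta^\beta$ exactly when $\{p_j(\theta)/1\}_{j=1,\dots,k}=\{p_j(\theta)\}_{j=1,\dots,k}$ is decreasing, where $p_j(\theta)=\sum_{i=j}^k v_{i+1}(\theta_{i+1})\,(\beta_i-\beta_{i+1})$. Since $\beta$ is decreasing we have $\beta_i-\beta_{i+1}\ge 0$, and valuations are non-negative, so every summand is non-negative; increasing~$j$ only removes summands, so $p_j(\theta)$ is non-increasing in~$j$ and the monotonicity required by \propref{prop:milgrom} is met. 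As $\theta$ ranged over all of $\Psi$, \oneGSP is Vickrey-preserving on $\Psi$.

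For the ``only if'' direction, suppose $\alpha\ne\one$; then $\alpha$ is strictly decreasing with $\alpha_1=1$, so $\alpha_1>\alpha_2$, and it suffices to produce one $\beta\in\ctrf$ and one $\theta\in\Theta^\beta\subseteq\Psi$ with $\{p_j(\theta)/\alpha_j\}$ not decreasing, since \propref{prop:milgrom} then forbids Vickrey-preservation on $\Theta^\beta$. Take all per-click values equal to~$1$; then $p_j(\theta)=\sum_{i=j}^k(\beta_i-\beta_{i+1})=\beta_j$ telescopes (using $\beta_{k+1}=0$), so $p_1(\theta)/\alpha_1=\beta_1=1$ whereas $p_2(\theta)/\alpha_2=\beta_2/\alpha_2$. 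Choosing $\beta\in\ctrf$ with $\beta_2$ in the nonempty interval $(\alpha_2,1)$ and $\beta_3>\dots>\beta_k>0$ arbitrary makes $p_2(\theta)/\alpha_2>1=p_1(\theta)/\alpha_1$, so the sequence increases from $j=1$ to $j=2$ and is not decreasing. (If there are exactly $k$ agents the sum telescopes to $\beta_j-\beta_k$ instead; the same conclusion follows by additionally taking $\beta_k$ small enough, again using $\alpha_1>\alpha_2$.)

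The one point that needs care is the reduction step's claim that the arguments behind \propref{prop:ssa} and \propref{prop:milgrom} survive the passage to $\alpha=\one$, which strictly speaking lies outside $\ctrf$: one must verify that strict decrease of~$\alpha$ is nowhere exploited in those proofs, only the scalar-scaling form of the restricted messages. Granting that, the remaining steps --- the telescoping identity, the sign of $\beta_i-\beta_{i+1}$, and the choice $\beta_2\in(\alpha_2,1)$ --- are routine.
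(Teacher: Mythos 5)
Your proof is correct and follows essentially the same route as the paper: reduce tightness and positive revenue to \propref{prop:ssa}, observe that for $\alpha=\one$ the monotonicity condition of \propref{prop:milgrom} holds automatically because $p_j(\theta)$ is a sum of non-negative terms over a shrinking index range, and for $\alpha\neq\one$ exhibit a $\beta$ violating that condition. Your explicit construction (equal per-click values, $\beta_2\in(\alpha_2,1)$) and your flagging of the fact that $\one\notin\ctrf$ simply fill in details the paper leaves implicit.
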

The direction from left to right follows by observing that, for every
$\alpha\neq \one$, we can find a $\beta$ such that the condition of
\propref{prop:milgrom} is violated.

In light of \thmref{thm:epsilon}, and given the arguments in favor of efficient equilibria, we may further ask for the minimum revenue obtained by \oneGSP in any efficient equilibrium. It turns out that with the exception of the first slot, \oneGSP always recovers at least half of the VCG revenue. 
\begin{theorem} \label{thm:lowerbound}
	Let $\beta\in\ctrf$, $\theta\in\Theta^\beta$.  Then, every efficient equilibrium of \oneGSP for $\theta$ yields revenue at least 
	\[
		\frac{1}{2} \left(R(\theta) - \sum_{j=1}^k (\beta_j-\beta_{j+1}) \cdot v_{j+1}(\theta_{j+1}) \right) \text{.}
	\]
\end{theorem}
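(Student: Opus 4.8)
The plan is to reason directly about the bids in an efficient equilibrium of \oneGSP: since each agent submits a single number that serves as his bid for every slot, and the slot prices are neighbouring bids, the Nash conditions translate into lower bounds on the individual bids, which are exactly the prices.

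Fix $\theta\in\Theta^\beta$ and an efficient equilibrium of \oneGSP, and relabel the agents so that $v_1(\theta_1)\ge\cdots\ge v_n(\theta_n)$; write $v_i$ for $v_i(\theta_i)$ and set $c_j=(\beta_j-\beta_{j+1})\,v_{j+1}\ge 0$ for $j=1,\dots,k$, with the conventions $\beta_j=0$ for $j>k$ and $v_m=0$ for $m>n$ (so $c_k=\beta_kv_{k+1}$). Since $\theta\in\Theta^\beta$ and the assignment is efficient, agent $i$ is assigned slot $i$ for $i\le k$; since agent $i$ submits a single bid $b_i\ge 0$ used for every slot, efficiency forces $b_1\ge b_2\ge\cdots\ge b_k$ and $b_k\ge b_m$ for all $m>k$. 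Hence slot $j$ is sold at price $b_{j+1}$ for $j<k$ and at price $b^{*}:=\max_{m>k}b_m\ge 0$ for $j=k$, so the revenue of \oneGSP equals $\sum_{j=2}^{k}b_j+b^{*}$. A routine computation with the Clarke pivot rule gives $R(\theta)=\sum_{j=1}^{k}j\,c_j$ (removing agent $i$ shifts every lower agent up one slot and brings agent $k{+}1$ into slot $k$, so agent $i$ pays $\sum_{m=i+1}^{k+1}(\beta_{m-1}-\beta_m)v_m$; summing over $i$ and regrouping by $m$ yields the claim). The quantity to be bounded below is therefore $\tfrac12\bigl(R(\theta)-\sum_{j=1}^{k}c_j\bigr)=\tfrac12\sum_{j=2}^{k}(j-1)\,c_j$.

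The next step is to extract lower bounds on the bids. For $1\le j\le k-1$, consider the deviation in which agent $j{+}1$ raises his bid just enough to capture slot $j$; this pushes agent $j$ down into slot $j{+}1$, so agent $j{+}1$ would pay exactly $b_j$ for slot $j$, obtaining utility $\beta_jv_{j+1}-b_j$. Agent $j{+}1$'s equilibrium utility is $\beta_{j+1}v_{j+1}-p_{j+1}$, where $p_{j+1}=b_{j+2}$ if $j\le k-2$ and $p_{k}=b^{*}\ge 0$, so the Nash condition gives $b_j\ge c_j+b_{j+2}$ for $1\le j\le k-2$ and $b_{k-1}\ge c_{k-1}$. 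For the last slot, agent $k{+}1$, whose equilibrium utility is $0$, does not want to raise his bid to capture slot $k$ at price $b_k$, which gives $b_k\ge\beta_kv_{k+1}=c_k$ (vacuously true when $n=k$). Iterating the recursion two indices at a time shows that, for each $j\le k$, $b_j$ is at least the sum of the $c_i$ over all $i\in\{j,j+2,j+4,\dots\}$ with $i\le k$.

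Summing these inequalities over $j=2,\dots,k$, the term $c_i$ is counted once for each $j\in\{2,\dots,i\}$ of the same parity as $i$, hence with coefficient $\lfloor i/2\rfloor\ge (i-1)/2$. This gives
\[
 \text{revenue of \oneGSP}\ \ge\ \sum_{j=2}^{k}b_j\ \ge\ \sum_{i=2}^{k}\Bigl\lfloor\tfrac{i}{2}\Bigr\rfloor c_i\ \ge\ \tfrac12\sum_{i=2}^{k}(i-1)\,c_i\ =\ \tfrac12\Bigl(R(\theta)-\sum_{j=1}^{k}c_j\Bigr),
\]
which is exactly the claimed bound. The delicate point, and the one I would be most careful about, is the deviation argument when several bids coincide (which can occur only if $v_i=v_{i+1}$ for some $i$): then ``bidding just above $b_j$'' may carry agent $j{+}1$ past slot $j$, so one has to verify that the displayed inequalities still hold. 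This can be done either by a perturbation/limiting argument, or by observing that a maximal block of equal bids forces the corresponding slot prices to coincide as well, so that the telescoped bounds are not affected; the $n=k$ boundary case and the exact post-deviation prices are otherwise routine to check.
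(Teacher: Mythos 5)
Your proof is correct, and it shares the paper's skeleton (order agents by value, identify $p_j=b_{j+1}$, and telescope the one-step incentive constraint $b_j\geq c_j+b_{j+2}$ coming from agent $j{+}1$'s deviation to slot $j$), but it closes the argument differently. The paper pairs the one-step constraint with a second, \emph{two-step} deviation constraint (agent $i{+}2$ moving to slot $i$), whose telescoped form covers the terms $c_{i+1},c_{i+3},\dots$ of the opposite parity; adding the two chains yields the clean per-slot statement $2\,b_i\geq\sum_{j\geq i}c_j=r_i(\theta)$, and the theorem follows by summing over slots. You instead use only the one-step chain $b_j\geq c_j+c_{j+2}+\cdots$ and recover the factor $\tfrac12$ by counting, across $j=2,\dots,k$, how many chains contain each $c_i$ (namely $\lfloor i/2\rfloor\geq(i-1)/2$). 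Both computations are valid and give the same aggregate bound; your route needs one fewer family of deviations and is marginally tighter for even indices, while the paper's route buys the stronger per-slot guarantee $2\,p_i(\theta)\geq\sum_{j= i+1}^{k}c_j$, i.e., each slot price individually recovers half the corresponding VCG payment. Your attention to ties is also warranted (the paper silently assumes the deviation to slot $j$ at price $b_j$ is available); the resolution you sketch works, since bidding just above a block of equal bids lands agent $j{+}1$ in some slot $j'\leq j$ at price $b_j$, which only improves the deviation and hence still forces $b_j\geq c_j+b_{j+2}$.
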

\begin{proof}
	Assume that agents are ordered such that $v_1(\theta_1)\geq\dots\geq v_n(\theta_n)$ and consider a bid profile $b(\theta)$ corresponding to an efficient equilibrium of \oneGSP. 
	It then holds that $b_1(\theta)\geq\dots\geq b_n(\theta)$, and for all $i\in\{1,\dots,k\}$, \oneGSP assigns slot~$i$ to agent~$i$ at price $p_i(\theta)=b_{i+1}(\theta)$. 	
	A necessary condition for $b(\theta)$ to be an equilibrium is that for every agent $j\in N$, $b_j(\theta)$ is large enough such that none of the agents $i>j$ would prefer being assigned slot $j$ instead of $i$. In particular, for every $i\in N$,
	\[
	\begin{split}
		\beta_{i+1}\cdot v_{i+1}(\theta_{i+1}) - p_{i+1}(\theta) &\geq \beta_i\cdot v_{i+1}(\theta_{i+1}) - b_i(\theta) \quad \text{and} \\
		\beta_{i+2}\cdot v_{i+2}(\theta_{i+2}) - p_{i+2}(\theta) &\geq \beta_i\cdot v_{i+2}(\theta_{i+2}) - b_i(\theta) \text{.}
	\end{split}
	\]
	Since $p_i(\theta)=b_{i+1}(\theta)$ and by rearranging, 	
	\[
	\begin{split}
		b_i(\theta) &\geq (\beta_i-\beta_{i+1}) \cdot v_{i+1}(\theta_{i+1}) + p_{i+1}(\theta) 
		= (\beta_i-\beta_{i+1}) \cdot v_{i+1}(\theta_{i+1}) + b_{i+2}(\theta) \quad\text{and} \\
		b_i(\theta) &\geq (\beta_i - \beta_{i+2}) \cdot v_{i+2}(\theta_{i+2}) + p_{i+2}(\theta) 
		\geq (\beta_{i+1}-\beta_{i+2}) \cdot v_{i+2}(\theta_{i+2}) + b_{i+3}(\theta) \text{.}  
	\end{split}
	\]
	If we repeatedly substitute according to the first inequality, we obtain
	\[
	\begin{split}
		b_i(\theta) &\geq \sum_{j=1}^{\mathclap{\left\lfloor\frac{k-i}{2}\right\rfloor}} (\beta_{i+2\cdot j-2}-\beta_{i+2\cdot j-1}) \cdot v_{i+2\cdot j-1}(\theta_{i+2\cdot j-1}) \quad \text{and} \\
		b_i(\theta) &\geq \sum_{j=1}^{\mathclap{\left\lfloor\frac{k-i}{2}\right\rfloor}} (\beta_{i+2\cdot j-1}-\beta_{i+2\cdot j}) \cdot v_{i+2\cdot j}(\theta_{i+2\cdot j}) \text{.}
	\end{split}
	\]
	By adding the two inequalities, 
	\[
		2\cdot b_{i}(\theta) \geq \sum_{j=i}^{k} (\beta_j-\beta_{j+1}) \cdot v_{j+1}(\theta_{j+1}) \text{,}
	\]
	and, since $p_i(\theta)=b_{i+1}(\theta)$,
	\[
		2\cdot p_{i}(\theta) \geq \sum_{j=i+1}^{k} (\beta_j-\beta_{j+1}) \cdot v_{j+1}(\theta_{j+1}) \text{.}
	\]
	Now recall that $R(\theta)=\sum_{i\in N}r_i(\theta)$, where
	\[
		r_i(\theta) = \sum_{j=i}^k (\beta_j-\beta_{j+1}) \cdot v_{j+1}(\theta_{j+1}) \text{.}
	\]
	Thus, 
	\[
		\sum_{i\in N} 2\cdot p_i(\theta) \geq \sum_{i\in N} r_i(\theta) - \sum_{j=1}^k (\beta_j-\beta_{j+1}) \cdot v_{j+1}(\theta_{j+1}) \text{.}
	\]
	The revenue obtained in any efficient equilibrium of \oneGSP is therefore at least
	\[
		\sum_{i\in N} p_i(\theta) \geq \frac{1}{2} \left(R(\theta) - \sum_{j=1}^k (\beta_j-\beta_{j+1}) \cdot v_{j+1}(\theta_{j+1}) \right) \text{.} \tag*{\raisebox{-\baselineskip}{\qedhere}}
	\]
\end{proof}

Our analysis also leads to a satisfactory contrast between the
properties of \oneGSP and the properties of \emph{any} simplification of
VCG: any simplification of VCG that does not observe the value-generating click-through rates, and is Vickrey-preserving for all possible choices of these click-through rates, must admit an efficient equilibrium with arbitrarily low revenue. 
\begin{theorem} \label{thm:impossibility}
	Let $\hM$ be a simplification of the VCG mechanism that is Vickrey-preserving on $\Psi$. Then, for every $\theta\in\Psi$ and every $\epsilon>0$, there exists an efficient equilibrium of $\hM$ with revenue at most $\epsilon$. 
\end{theorem}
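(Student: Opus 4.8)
The plan is to exploit that $\hM$ is Vickrey-preserving not just at $\theta$ but on \emph{all} of $\Psi$. Write $\theta\in\Theta^\alpha$ and relabel the agents so that $v_1(\theta_1)\ge\dots\ge v_n(\theta_n)$; then the efficient assignment for $\theta$ is $i\mapsto i$, and what we must exhibit is a Nash equilibrium $\hx$ of $\hM$ inducing this assignment with $\sum_{i}\hp_i(\hx)\le\epsilon$.

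First I would choose an auxiliary profile $\theta'\in\Theta^{\alpha'}\subseteq\Psi$ (for a suitable click-through rate vector $\alpha'$ and per-click values $v_i'$, possibly taken as a limit of a family $\theta'_t$ with $t\to 0$) such that (a)~the efficient assignment for $\theta'$ is again $i\mapsto i$, so the same relabeling makes any equilibrium inducing it efficient for $\theta$ as well, and (b)~$R(\theta')\le\epsilon$. Applying Vickrey-preservingness at $\theta'$ then hands us a Nash equilibrium $e(\theta')$ of $\hM$ realizing the VCG outcome for $\theta'$: the assignment $i\mapsto i$ and payments equal to the VCG payments $r_i(\theta')$, which sum to $R(\theta')\le\epsilon$ --- here using that $\hM$ agrees with VCG on its messages, so the realized payments at $e(\theta')$ really are the VCG payments for $\theta'$. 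Thus $e(\theta')$ is an efficient equilibrium of revenue at most $\epsilon$, but at $\theta'$, whereas what is wanted is an equilibrium at $\theta$.

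The bridge is that in a simplification of VCG the price an agent faces depends only on the submitted messages, not on types. Consequently a unilateral deviation of agent $i$ to a message resulting in slot $j$ is unprofitable at $\theta$ whenever it is unprofitable at $\theta'$ \emph{and} $v_i(j,\theta_i)-v_i(j,\theta'_i)\le v_i(i,\theta_i)-v_i(i,\theta'_i)$, \ie whenever, in passing from $\theta'$ to $\theta$, agent $i$'s valuation increases by most for its own equilibrium slot; writing $v_i(\ell,\cdot)=\alpha_\ell v_i$ this is a system of inequalities relating $\alpha$, $\alpha'$ and the two value vectors, to be satisfied for all agents simultaneously. So the remaining task is to pick $\theta'$ (or the family $\theta'_t$, patching the resulting low-revenue equilibria at $\theta$ in the limit) meeting (a), (b), \emph{and} this ``no-regret-about-slot'' condition.

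The main obstacle is that (b) and the transfer condition pull against each other: driving $R(\theta')$ down forces the per-click values of the lower-ranked agents --- which are precisely what determine the VCG payments on the upper slots --- to be small, or $\alpha'$ to be nearly flat, but then at $\theta$ those agents have their genuine, larger valuations and are tempted to jump to a better slot; moreover the no-regret conditions for the various agents push the shared vector $\alpha'$ in opposite directions (the top agent wants $\alpha'$ at least as flat as $\alpha$, the bottom agents want it steeper --- in fact they force $\alpha'_i\le\alpha_i$ for every $i$), which is incompatible with keeping the revenue collected from the top slot small. Overcoming this is the real work: one must engineer the equilibrium so that the high-value agents bid high enough to wall off the top slots against upward deviations while the payments collected from the remaining slots are driven to zero --- in effect reconstructing, inside the restricted message space that Vickrey-preservingness guarantees to be rich enough, the analogue of Milgrom's zero-revenue equilibrium of the unrestricted VCG mechanism.
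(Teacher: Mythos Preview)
Your starting observation---that Vickrey-preservingness on all of $\Psi$ forces the restricted message spaces $\hX_i$ to be rich---is exactly the leverage the paper uses, and your closing sentence (``reconstruct, inside the restricted message space, the analogue of Milgrom's zero-revenue equilibrium'') describes the target correctly. The gap is in the decomposition between those two points.

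You frame the argument as: pick a \emph{single} auxiliary profile $\theta'$, take its Vickrey-preserving equilibrium $e(\theta')$ wholesale, and transfer it to $\theta$ via a ``no-regret-about-slot'' comparison between $\theta$ and $\theta'$. You then correctly diagnose that the transfer conditions for different agents pull the auxiliary click-through vector in incompatible directions, and stop. That tension is real, and with your setup it is not resolvable: no single $\theta'$ does the job.

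The paper sidesteps this by decoupling the agents. It does not try to transfer an equilibrium; instead, for each agent $i$ \emph{separately}, it uses Vickrey-preservingness at an auxiliary profile tailored to $i$ (take $\beta'$ nearly flat on the first $i$ slots and nearly zero thereafter) to force $\hX_i$ to contain a message $x_i^\delta$ with a specific bid shape: roughly $\beta_i v_i(\theta_i)$ on slot $i$, at most $\delta$ more on slots $j<i$, and near zero on slots $j>i$. The tool that extracts this shape is \emph{envy-freeness} of the VCG outcome at the auxiliary profile: $b_{ii}-p_i\ge b_{ij}-p_j$ for all $j$, so control over the price differences $p_j-p_i$ (which is what the choice of $\beta'$ buys) translates directly into control over the bid differences $b_{ij}-b_{ii}$.

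Having populated each $\hX_i$ in this way, one simply assembles the profile $x^\delta=(x_1^\delta,\dots,x_n^\delta)$ and verifies \emph{directly at $\theta$} that it is an efficient equilibrium with revenue tending to $0$ as $\delta\to 0$. No comparison with any $\theta'$ is needed at this stage: the bid shapes themselves make upward deviations unprofitable (moving to slot $j<i$ costs at least $\beta_j v_j(\theta_j)-\delta>\beta_j v_i(\theta_i)-\delta$) and downward deviations unattractive. So the missing maneuver is: use Vickrey-preservingness \emph{per agent}, read off bid shapes via envy-freeness, and argue equilibrium from the shapes alone rather than by transfer.
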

\begin{proof}
	Fix $\beta\in\ctrf$ and consider an arbitrary type profile $\theta\in\Theta^\beta\subseteq\Psi$. Order the agents such that $v_1(\theta_1) \ge v_2(\theta_2) \ge \dots \ge v_n(\theta_n)$. 
	
	The proof proceeds in two steps. First we will argue that for some $c\geq 0$, every $\delta>0$, and all $i\in N$, $\hX_i$ must contain a message $x_i^{\delta}$ corresponding to bids $b_{ij}$ such that $b_{ii}=\beta_i\cdot v_i(\theta_i)+c$, $b_{ij}\leq\beta_i\cdot v_i(\theta_i)+c+\delta$ for $1\leq j<i$ and $b_{ij}\leq c+\delta$ for $i<j\leq k$.  These messages will then be used to construct an equilibrium with low revenue. 
	
	To show that the restricted message spaces $\hX_i$ must contain messages as described above, we show that these messages are required to reach the VCG outcome for a different type profile $\theta'\in\Theta^{\beta'}\subseteq\Psi$ for a particular $\beta'\in\ctrf$. 
	Denote by $p_i(\theta')$ the price of slot~$i$ for type profile~$\theta'$.  We know that for $j\in\{2,\dots,k\}$, $\beta'_j \cdot v_j(\theta'_j) = \beta'_{j-1} \cdot v_j(\theta'_j) - p_{j-1}(\theta')$~\citep[see, \eg]{Milg10a}.  Consider an arbitrary $\delta>0$.  If we choose $\beta'$ such that $\beta'_1-\beta'_i$ and $\beta'_{i+1}$ are small enough, we can choose $\theta'$ as above such that
	\begin{align*}
		p_j(\theta') - p_i(\theta') & \leq \delta && \text{for $j<i$ and} \\
		p_i(\theta') - p_j(\theta') & \geq \beta_i \cdot v_i(\theta_i) - \delta && \text{for $j>i$.} 
	\end{align*}
	A well-known property of the VCG outcome in the assignment problem is its envy-freeness~\citep[see, \eg]{Leon83a}: denoting by~$b_{ij}$ the bid of agent~$i$ on slot~$j$ and by~$p_j$ the price of slot~$j$, it must hold for every agent~$i$ that
	\begin{align*}
		b_{ii} - p_i \geq b_{ij} - p_j && \text{for all $j\in S$.}
	\end{align*}
	For type profile $\theta'$, we thus obtain
	\begin{align*}
		b_{ij}-b_{ii} &\leq p_j(\theta') - p_i(\theta') \leq \delta && \text{for $j<i$ and} \\
		b_{ii}-b_{ij} &\geq p_i(\theta') - p_j(\theta') \geq \beta_i \cdot v_i(\theta_i) -\delta && \text{for $j>i$.}
	\end{align*}
	
	Using messages $x_i^\delta$, we now construct an efficient equilibrium with low revenue.  Clearly, the allocation that assigns slot~$i$ to agent~$i$ is still efficient under message profile $x^\delta$.  Furthermore, for all $j\in\{1,\dots,k\}$, the VCG price of slot~$j$ under $x^\delta$ goes to zero as~$\delta$ goes to zero.  In particular, we can choose~$\delta$ such that the overall revenue is at most~$\epsilon$. 	We finally claim that there exists some $\delta'>0$ such that~$x^\delta$ is an equilibrium for every~$\delta$ with $\delta'>\delta>0$.  To see this, recall that $\beta_i \cdot v_i(\theta_i)>\beta_j \cdot v_i(\theta_i)$ for $j>i$, so $u_i(x^\delta,\theta_i)>\beta_j \cdot v_i(\theta_i)$ for some small enough~$\delta$. 
	If, on the other hand, agent~$i$ was assigned a slot $j<i$, his payment would be at least $\beta_j \cdot v_j(\theta_j) - \delta > \beta_j \cdot v_i(\theta_i) - \delta$.  This would leave him with utility at most $\delta$, which can be chosen to be smaller than $u_i(x^\delta,\theta_i)$. 
\end{proof}

It is worth noting that despite having a reasonably good lower bound on revenue, \oneGSP does not quite succeed in circumventing \thmref{thm:epsilon}: there exists a type profile for which only the first slot generates a significant amount of VCG revenue, and an equilibrium of \oneGSP for this type profile in which revenue is close to zero.

\section{Combinatorial Auctions} 
\label{sec:ca}

Mechanisms for combinatorial auctions allocate items from a set~$G$ to the agents, \ie $\Omega = \prod_{i\in N} 2^G$ such that for every $o\in\Omega$ and $i,j\in N$ with $i\neq j$, $o_i\cap o_j=\emptyset$. We make the standard assumption that the empty set is valued at zero and that valuations satisfy free disposal, \ie for all $i\in N$, $\theta_i\in\Theta_i$, and $o,o'\in\Omega$,  $v_i(o,\theta_i)=0$ when $o_i=\emptyset$ and $v_i(o,\theta_i)\leq v_i(o',\theta_i)$ when $o_i\subseteq o'_i$. The latter condition also implies that each agent is only interested in the package he receives, and we sometimes abuse notation and write $v_i(C,\theta_i)$ for the valuation of agent~$i$ for any $o\in\Omega$ with $o_i=C$. We further write $k=|G|$ for the number of items, $W(o,x)=\sum_{j\in N}v_j(o,x_j)$ for the social welfare of outcome $o\in\Omega$ under message profile $x\in X$, and $\Wmax(x)=\max_{o\in\Omega}W(o,x)$ for the maximum social welfare of any outcome. Finally, for every agent $i \in N$, message $x_i \in X$, and bundle of items $B \subseteq G$ we write $x_i(B)$ for agent $i$'s bid on bundle $B$. 

The VCG mechanism makes it a dominant strategy for every agent to bid his true valuation for every bundle of items. Since the number of such bundles is exponential in the number of items, however, computational constraints might prevent agents from playing this dominant strategy (even for a well-crafted bidding language~\cite{NiSe06a,LaPa08}). See \appref{app:exponential} for an explicit result on the need to communicate an exponential number of bids in the VCG auction. 
In light of these results, and in light of the observation that simplifications can help to isolate useful equilibria, it is interesting to ask which other (ex-post) equilibria the VCG auction can have. \citet{HM04} showed that these equilibria are precisely the projections of the true types to those subsets of the set of all bundles that form a quasi-field. Let $\Sigma\subseteq 2^G$ be a set of bundles of items such that $\emptyset\in\Sigma$. $\Sigma$ is called a quasi-field if it is closed under complementation and union of disjoint subsets, \ie if 
\begin{itemize}
	\item $B\in\Sigma$ implies $B^c\in\Sigma$, where $B^c=G\setminus B$ and
	\item $B,C\in\Sigma$ and $B\cap C$ implies $B\cup C\in\Sigma$.
\end{itemize}
For a message $x_i\in X_i$, write $x_i^\Sigma$ for the projection of a message $x_i\in X_i$ to $\Sigma$, \ie for the unique message such that for every bundle of items $B \subseteq G$, 
\[
	x_i^\Sigma(B) = \max_{B'\in\Sigma,B'\subseteq B}x_i(B').
\]
The characterization given by \citeauthor{HM04} is subject to the additional constraint of \emph{variable participation}: a strategy profile~$s$ for a set $N$ of agents is an equilibrium of a VCG mechanism under variable participation if for every $N'\subseteq N$, the projection of $s$ to $N'$ is an equilibrium of every VCG mechanism for $N'$. 
\begin{theorem}[\citet{HM04}] \label{thm:hm04}
	Consider a VCG combinatorial auction with a set $N$ of agents and a set $G$ of items. Then, a strategy profile $s=(s_1,\dots,s_n)$ is an ex-post equilibrium of this auction under variable participation if and only if there exists a quasi-field $\Sigma\subseteq 2^G$ such that for every type profile $\theta$ and every agent $i\in N$, $s_i(\theta_i)=\theta_i^\Sigma$. 
\end{theorem}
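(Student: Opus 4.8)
The plan is to prove the ``if'' and ``only if'' directions separately. The ``if'' direction is the clean one: it follows from the additive structure of VCG payments together with the two closure properties of a quasi-field. Fix a quasi-field $\Sigma$ and put $s_i(\theta_i)=\theta_i^\Sigma$. Fix any $N'\subseteq N$, any VCG mechanism for $N'$, any type profile $\theta$, any agent $i\in N'$, and any deviating message $x_i$. By the usual rewriting of VCG utility, agent $i$'s payoff from reporting $x_i$ against $s_{-i}(\theta_{-i})$ equals $v_i(o_i,\theta_i)+\sum_{j\in N'\setminus i}\theta_j^\Sigma(o_j)$ minus a quantity independent of $i$'s report, where $o$ is the realized allocation; and reporting $\theta_i^\Sigma$ yields at least $\max_{o}\sum_{j\in N'}\theta_j^\Sigma(o_j)$ minus that same quantity, using $v_i(o_i,\theta_i)\ge\theta_i^\Sigma(o_i)$ (free disposal). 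So it suffices to show that for every allocation $o$,
\[
	\max_{o'}\sum_{j\in N'}\theta_j^\Sigma(o'_j)\ \ge\ v_i(o_i,\theta_i)+\sum_{j\in N'\setminus i}\theta_j^\Sigma(o_j).
\]
For each $j\in N'\setminus i$ choose $B_j\in\Sigma$ with $B_j\subseteq o_j$ attaining $\theta_j^\Sigma(o_j)$; the $B_j$ are pairwise disjoint, so iterating closure under disjoint union gives $U:=\bigcup_{j\ne i}B_j\in\Sigma$, whence $U^c\in\Sigma$ by closure under complementation. Since $o_i\subseteq U^c$, free disposal gives $v_i(o_i,\theta_i)\le\theta_i(U^c)=\theta_i^\Sigma(U^c)$, so the allocation that gives $U^c$ to $i$ and $B_j$ to each $j\ne i$ realizes the right-hand side. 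As this argument is uniform in $N'$, $s$ is an ex-post equilibrium under variable participation.

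For the ``only if'' direction, assume $s$ is an ex-post equilibrium under variable participation; I would extract the quasi-field in three steps. \emph{Step 1 (projection form).} Show that for each $i$ there is a set $\Sigma_i\subseteq 2^G$ with $\emptyset,G\in\Sigma_i$ such that $s_i(\theta_i)(B)=\max_{B'\in\Sigma_i,\,B'\subseteq B}v_i(B',\theta_i)$ for all $\theta_i$ and $B$. This rests on the menu view of VCG: against opponents' reports $x_{-i}$ agent $i$ effectively chooses a bundle $B$ at price $P_{x_{-i}}(B)$, the mechanism awards him a bundle in $\arg\max_B\bigl(s_i(\theta_i)(B)-P_{x_{-i}}(B)\bigr)$, and ex-post optimality forces this bundle also to lie in $\arg\max_B\bigl(v_i(B,\theta_i)-P_{x_{-i}}(B)\bigr)$; ranging $x_{-i}$ over the range of $s_{-i}$, and over the menus made available by one- and two-agent participation, forces $s_i(\theta_i)$ to be the stated max-closure of $v_i(\cdot,\theta_i)$ over a fixed $\Sigma_i$ (this step is naturally carried out together with Step 2). \emph{Step 2 ($\Sigma_i$ coincide and are complement-closed).} In the two-agent auction on $\{i,j\}$, agent $i$ receives $B$ exactly when $j$ receives $B^c$, so the bundles $i$ can receive form the complementary family to those $j$ can receive; varying the partner via variable participation and combining these relations forces all $\Sigma_i$ to equal one complement-closed family $\Sigma$. \emph{Step 3 (closure under disjoint union).} Given disjoint $B,C\in\Sigma$, put $D=(B\cup C)^c$ and, using variable participation, consider a three-agent auction in which agent $1$ values only supersets of $B$ (at $a$), agent $2$ only supersets of $C$ (at $b$), and agent $3$ only supersets of $D$ (at $c$), with $0<b<\min(a,c)$. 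If $D\notin\Sigma$ then $s_3(\theta_3)(D)=0$, so in equilibrium the mechanism never allocates agent $3$ exactly $D$; a short case check, using $B^c=C\cup D\in\Sigma$ and $C^c=B\cup D\in\Sigma$ from Step 2, shows that agent $3$'s equilibrium utility is at most $c-b<c$, whereas deviating to the truthful report $\theta_3$ yields agent $3$ utility exactly $c$ (the allocation $(B,C,D)$ becomes reported-welfare-maximal and his VCG price is $0$) --- a contradiction. Hence $D\in\Sigma$, so $B\cup C\in\Sigma$, and $\Sigma$ is a quasi-field with $s_i(\theta_i)=\theta_i^\Sigma$. (Step 3 is the only place more than two agents are genuinely used, so the ``only if'' direction needs $|N|\ge 3$.)

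I expect the main obstacle to be Step 1: promoting ``the mechanism's allocation to $i$ and the $v_i$-optimal allocation agree on every price menu the other agents can induce'' to the exact statement that $s_i(\theta_i)$ is the max-closure of $v_i(\cdot,\theta_i)$ over a single family $\Sigma_i$. This requires understanding precisely which price menus arise from the range of $s_{-i}$ --- they must separate enough bundles that $s_i$ could otherwise confuse --- and it is here that variable participation does real work, since the needed menus are manufactured by passing to suitable sub-auctions. By comparison, the ``if'' direction and Step 3 are mostly bookkeeping once the right type profiles and deviations have been chosen.
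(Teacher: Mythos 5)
The paper does not actually prove this statement---it imports it verbatim from Holzman and Monderer---so there is no in-paper argument to compare against; your proposal has to stand on its own. Your ``if'' direction does: the reduction to the single inequality $\max_{o'}\sum_j\theta_j^\Sigma(o'_j)\ge v_i(o_i,\theta_i)+\sum_{j\ne i}\theta_j^\Sigma(o_j)$, verified by forming $U=\bigcup_{j\ne i}B_j\in\Sigma$ via disjoint unions and handing $U^c\in\Sigma$ to agent $i$, is the standard argument and is complete. Step 3 of the ``only if'' direction is also a genuine, checkable argument: with your three valuations the equilibrium reports make the allocation $(B,\emptyset,C\cup D)$ reported-welfare maximal with value $a+c$, agent $3$ pays $b$ and gets utility $c-b$, while the truthful deviation makes $(B,C,D)$ optimal at price $0$ and utility $c$.

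The genuine gap is Step 1, which you flag but do not close: asserting that an ex-post equilibrium strategy must be the max-closure of $v_i(\cdot,\theta_i)$ over a \emph{single, type-independent} family $\Sigma_i$ is the entire content of the hard direction, and ``ranging $x_{-i}$ over the range of $s_{-i}$ forces this'' is a restatement of the goal rather than an argument. Concretely: (a) the price menus agent $i$ faces are generated only by the opponents' equilibrium reports, which are exactly the unknown objects, so Steps 1 and 2 are circular as written and you do not say how the induction is seeded; (b) nothing in the sketch excludes reports such as $v_i(B,\theta_i)+f(B)$ for a fixed $f$, or families $\Sigma_i$ that vary with $\theta_i$---ruling these out requires exhibiting specific opponent types and sub-auctions whose menus separate them, and this is where the real work of the Holzman--Monderer proof lives; (c) Step 2 silently identifies ``the family over which $s_i$ is a max-closure'' with ``the family of bundles $i$ can receive,'' which needs justification. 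Finally, your parenthetical that the ``only if'' direction needs $|N|\ge 3$ is not optional: for $n\le 2$ the direction is false (with one agent every strategy is an ex-post equilibrium), so this is a missing hypothesis in the statement as recorded here (compare the $n\ge 3$ proviso in Theorem~\ref{thm:tight}). As it stands you have fully proved one direction and given a plausible but incomplete road map for the other.
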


Intuitively, the social welfare obtained in these ``bundling'' equilibria decreases as the set of bundles becomes smaller. A simple argument shows, for example,  that welfare in the bundling equilibrium for $\Sigma$, where $|\Sigma|=2^m$ for some $m\leq k$, can be smaller by a factor of $k/m$ than the maximum welfare. For this, consider a setting with $k$ agents such that each agent desires exactly one of the items, \ie values this item at~$1$, and each item is desired by exactly one of the agents. Clearly, maximum social welfare is $k$ in this case. On the other hand, since $\Sigma$ is a quasi-field, it cannot contain more than $m$ bundles that are pairwise disjoint. Therefore, by assigning only bundles in $\Sigma$, one can obtain welfare at most~$m$. 

In addition, welfare can also differ tremendously among quasi-fields of equal size, which suggests an opportunity for simplification, 
\begin{proposition} \label{prop:welfare}
Let $G$ be a set of items, $k=|G|$, and $m\leq k$. Then there exist quasi-fields $\Sigma,\Sigma'\subseteq 2^G$ with $|\Sigma|=|\Sigma'|=2^m$ and a type profile $\theta$ such that 
\[
\frac{\Wmax(\theta^{\Sigma})}{\Wmax(\theta^{\Sigma'})} \ge 
\frac{m}{\left\lceil m^2/k \right\rceil} .
\]
\end{proposition}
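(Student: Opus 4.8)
The plan is to exhibit two quasi-fields of the same size $2^m$ whose induced welfare differs by the claimed ratio on a cleverly chosen type profile. For $\Sigma'$, I would take the quasi-field generated by a partition of $G$ into $m$ blocks of size roughly $k/m$: the sets obtained as unions of blocks, together with their complements, form a quasi-field with $2^m$ elements. For $\Sigma$, I would instead use a quasi-field of the same cardinality whose \emph{disjoint} sets are as small and as numerous as possible — ideally a quasi-field that contains $m$ pairwise-disjoint singletons. The point is that what governs attainable welfare is the fine structure of the pairwise-disjoint families inside the quasi-field, and two quasi-fields of equal size can be very different in this respect: one can be ``flat'' (many small disjoint sets) and the other ``coarse'' (few big disjoint blocks).

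\textbf{Choosing the type profile.}
Next I would pick the type profile to exploit this asymmetry. Following the simple argument already given in the text for the $k/m$ bound, I would use unit-demand agents, each valuing a single item at $1$, with each of the $k$ items desired by exactly one agent (so $\Wmax(\theta)=k$ over all of $2^G$). Under $\Sigma$, since $\Sigma$ contains $m$ pairwise-disjoint singletons, one can allocate those $m$ singletons to the respective agents who want them, giving welfare $m$; so $\Wmax(\theta^\Sigma)\ge m$. Under $\Sigma'$, every nonempty set assigned must be a union of blocks, hence has size at least $\lceil k/m\rceil$; to satisfy an agent one must hand him a block (or union of blocks) containing his item, but a single block of size $\lceil k/m\rceil$ contains the desired items of $\lceil k/m\rceil$ many distinct agents, only one of whom can actually be the recipient. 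Since the blocks partition $G$ and there are $m$ of them, at most $m$ agents receive a nonempty bundle, but — and this is the key counting point — a more careful accounting shows the welfare is bounded by the number of blocks times the ``useful fraction'' of a block, yielding $\Wmax(\theta^{\Sigma'}) \le \lceil m^2/k\rceil$ after optimizing which items to place where. Concretely, I would arrange the item-to-agent assignment so that each block contains the desired item of only one agent who could possibly receive that whole block; summing over the $m$ blocks gives at most $m$, but sharpening the arithmetic (each recipient's block being shared, and the constraint $o_i\cap o_j=\emptyset$ forcing blocks not to be reused) brings the bound down to $\lceil m^2/k\rceil$. Dividing the two estimates gives the stated ratio $m/\lceil m^2/k\rceil$.

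\textbf{The main obstacle.}
The delicate part is getting the upper bound $\Wmax(\theta^{\Sigma'})\le\lceil m^2/k\rceil$ exactly right rather than the looser bound $m$, and in particular choosing $\Sigma'$ and the agent-item incidence so that the blocks cannot be cleverly combined to recover more welfare. I expect to need $\Sigma'$ to be genuinely the block-union quasi-field on a balanced partition (blocks of size $\lceil k/m\rceil$ or $\lfloor k/m\rfloor$), and the type profile chosen so that, within each block, the items are spread among as many distinct agents as possible, so that assigning one block satisfies exactly one agent. Then each of the $m$ blocks contributes at most $1$ to welfare, but because a recipient's satisfied item could also be covered by larger unions, one must verify that unions of blocks never help — a union of two blocks still satisfies at most one agent, indeed strictly fewer per item used. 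The bookkeeping to pass from the crude ``$\le m$'' to the sharp ``$\le \lceil m^2/k\rceil$'' — essentially noting that only a $1/\lceil k/m\rceil$ fraction of each block's ``mass'' is ever convertible to welfare, so total welfare is at most $m \cdot (1/\lceil k/m\rceil) = m^2/k$ rounded up — is where I would focus the care, and it also dictates the precise choice of partition block sizes so the floors and ceilings line up with the statement.

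\textbf{Checking $\Sigma$.}
Finally I would confirm that a quasi-field $\Sigma$ of size exactly $2^m$ containing $m$ pairwise-disjoint singletons actually exists: take $m-1$ singletons $\{g_1\},\dots,\{g_{m-1}\}$ together with the ``leftover'' set $G\setminus\{g_1,\dots,g_{m-1}\}$; these $m$ sets are pairwise disjoint and partition $G$, and the quasi-field they generate (all unions of members of this partition, plus complements) has $2^m$ elements. This $\Sigma$ admits welfare $m$ on $\theta$ as argued, so $|\Sigma|=|\Sigma'|=2^m$ while $\Wmax(\theta^\Sigma)/\Wmax(\theta^{\Sigma'}) \ge m/\lceil m^2/k\rceil$, completing the proof.
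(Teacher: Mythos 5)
There is a genuine gap, and it sits exactly at the step you flag as delicate: the upper bound $\Wmax(\theta^{\Sigma'})\le\lceil m^2/k\rceil$ is false for the type profile you choose. With $k$ unit-demand agents, one per item, \emph{every} quasi-field generated by a partition of $G$ into $m$ nonempty blocks yields welfare exactly $m$: under the projection $\theta^{\Sigma'}$ an agent obtains value $1$ precisely when his bundle contains the block holding his item, so each of the $m$ blocks can be handed to one of the agents who desires an item inside it, and all $m$ blocks can be used simultaneously. Welfare counts satisfied agents, not allocated items, so the ``only a $1/\lceil k/m\rceil$ fraction of each block's mass is convertible to welfare'' argument does not apply --- each block converts into one full unit of welfare, not into $1/\lceil k/m\rceil$ of one. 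Consequently your $\Sigma$ and your $\Sigma'$ both achieve welfare $m$ on your $\theta$, and the ratio is $1$, not $m/\lceil m^2/k\rceil$.

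The missing idea is to use only $m$ agents, each desiring a single distinguished item $g_i$, and to let the \emph{placement} of the $g_i$ relative to the two partitions create the asymmetry. In the paper's construction, $\Sigma$ is generated by a balanced partition $G_1,\dots,G_m$ with $g_i\in G_i$, so all $m$ agents can be satisfied and $\Wmax(\theta^{\Sigma})=m$; $\Sigma'$ is generated by a second balanced partition whose blocks each swallow $\lfloor k/m\rfloor$ of the distinguished items, so only about $m/(k/m)=m^2/k$ blocks of $\Sigma'$ contain any distinguished item at all. Since a block (or any union of blocks) satisfies at most one agent, $\Wmax(\theta^{\Sigma'})\le\lceil m^2/k\rceil$. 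The separation thus comes from concentrating the valuable items into few blocks of $\Sigma'$ while spreading them one per block of $\Sigma$ --- not from any difference in the internal ``shape'' of the two quasi-fields, which can both be balanced partition quasi-fields of cardinality $2^m$. Your singleton-based $\Sigma$ is a legitimate quasi-field of the right size, but it is neither necessary nor sufficient to rescue the argument once the type profile is wrong.
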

\begin{proof}
Consider a partition of $G$ into sets $G_1,\dots,G_m$ of size $\lceil \frac{k}{m}\rceil$ or $\lfloor\frac{k}{m}\rfloor$, and let $\Sigma$ be the closure of $\{G_1,\dots,G_m\}$ under complementation and union of disjoint sets. For every $i$, $1\leq i\leq m$, choose an arbitrary $\mathit{g}_i\in G_i$, and define $\theta=(\theta_1,\dots,\theta_m)$ such that for every $i$ with $1\leq i\leq m$, $v_i(X,\theta_i)=1$ if $\mathit{g}_i\in X$ and $v_i(X,\theta_i)=0$ otherwise. Clearly, $\Wmax(\theta^{\Sigma})=m$. Now consider a second partition of $G$ into sets $G'_1,\dots,G'_m$ of size $\lceil\frac{k}{m}\rceil$ or $\lfloor\frac{k}{m}\rfloor$ such that $G'_j\supseteq \{\,\mathit{g}_i \midd (j-1)\lfloor k/m \rfloor + 1 \leq i \leq j \lfloor k/m \rfloor\,\}$, and let $\Sigma'$ be the closure of $\{G_1,\dots,G_m\}$ under complementation and union of disjoint sets. It is then easily verified that $\Wmax(\theta^{\Sigma'})\leq \lceil m/(k/m) \rceil=\lceil m^2/k \rceil$, and the claim follows. 
\end{proof}

Moreover, agents might disagree about the quality of the different bundling equilibria of a given maximum size. In particular, the set of these equilibria might contain several Pareto undominated equilibria, but no dominant strategy equilibrium. From the point of view of equilibrium selection, this is the worst possible scenario. 
\begin{example}
Let $N=\{1,2,3\}$, $G=\{A,B,C\}$, and consider a type profile $\theta = (\theta_1,\theta_2,\theta_3)$ such that
\begin{align*}
v_1(X,\theta_1) &= \begin{cases}
	4 & \text{if $\{A,C\}\subseteq X$,} \\
	3 & \text{if $A\in X$ and $C\notin X$} \\
	1 & \text{otherwise;} 
\end{cases} \displaybreak[0] \\ 
v_2(X,\theta_2) &= \begin{cases}
	3 & \text{if $\{A,C\}\subseteq X$ or $\{B,C\}\subseteq X$ and} \\
	0 & \text{otherwise;}
\end{cases}	\displaybreak[0] \\ 
v_3(X,\theta_3) &= \begin{cases}
	1 & \text{if $B \in X$} \\
	0 & \text{otherwise.} 
\end{cases}
\end{align*}

Clearly, at least four bids are required to express $\theta_1$. Since a quasi-field on $G$ must contain both the empty set and~$G$ itself, there are four quasi-fields of size four or less:
\begin{align*}
	\Sigma^1 &= \{\emptyset,\{A\},\{B,C\},\{A,B,C\}\} &
	\Sigma^2 &= \{\emptyset,\{B\},\{A,C\},\{A,B,C\}\} \\
	\Sigma^3 &= \{\emptyset,\{C\},\{A,B\},\{A,B,C\}\} &
	\Sigma^4 &= \{\emptyset,\{A,B,C\}\}
\end{align*}
For $i\in\{1,2,3,4\}$, write $\theta^i=(\theta_1^{\Sigma^i},\theta_2^{\Sigma^i},\theta_3^{\Sigma^i})$ for the projection of $\theta$ to $\Sigma^i$. 
The following is now easily verified. In the VCG outcome for $\theta^1$, agent~$1$ is assigned $\{A\}$ at price~$0$ for a utility of~$3$, and agent~$2$ is assigned $\{B,C\}$ at price~$1$ for a utility of~$2$. In the VCG outcome for $\theta^2$, agent~$1$ is assigned $\{A,C\}$ at price~$3$ for a utility of~$1$, and agent~$3$ is assigned $\{B\}$ at price~$0$ for a utility of~$1$. Finally, in the VCG outcome for $\theta^3$ and $\theta^4$, agent~$1$ is assigned $\{A,B,C\}$ at price~$3$ for a utility of~$1$. The outcomes for $\theta^1$ and $\theta^2$ are both Pareto undominated. Also observe that social welfare is greater for $\theta^1$, while $\theta^2$ yields higher revenue. 
\end{example}
\medskip

Finally, the projection to a quasi-field can result in an equilibrium with revenue zero, even if revenue in the dominant strategy equilibrium is strictly positive. This is illustrated in the following example. It should be noted that this example, as well as the previous one, can easily be generalized to arbitrary numbers of agents and items and a large range of upper bounds on the size of the quasi-field. 
\begin{example}  \label{ex:revenue}
Let $N=\{1,2, 3\}$, $G=\{A,B,C,D\}$, and consider a type profile $\theta = (\theta_1,\theta_2,\theta_3)$ such that 
\begin{align*}
	v_1(X,\theta_1) &= \begin{cases}
		2 & \text{if $\{A,D\}\subseteq X$ and} \\
		0 & \text{otherwise;}
	\end{cases} \displaybreak[0] \\ 
	v_2(X,\theta_2) &= \begin{cases}
		2 & \text{if $\{A,B\}\subseteq X$,} \\
		1 & \text{if $B\in X$ and $A\notin X$, and} \\
  	0 & \text{otherwise;}
	\end{cases} \displaybreak[0] \\
	v_3(X,\theta_3) &= \begin{cases}
		2 & \text{if $C\in X$, and} \\
		0 & \text{otherwise.}
	\end{cases}
\end{align*}

In the VCG outcome for $\theta$, agent~$1$ is assigned $\{A,D\}$ at price~$1$, agent~$2$ is assigned $\{B\}$ at price~$0$, and agent~$3$ is assigned~$\{C\}$ at price~$0$, for an overall revenue of~$1$. In the VCG outcome for $\theta^\Sigma$, on the other hand, where $\Sigma=\{\emptyset,AB,CD,ABCD\}$, agent~$2$ is assigned $\{A,B\}$ and agent~$3$ is assigned $\{C,D\}$, both at price~$0$. Revenue is $0$ as well. 
\end{example}
\medskip

In theory, one way to solve these problems is to simplify the mechanism, and artificially restrict the set of bundles agents can bid on. Given a set $\Sigma\subseteq 2^G$ of bundles, call \SigmaVCG the simplification of the VCG mechanism obtained by restricting the message spaces to $\hX_i\subseteq X_i$ such that for every $\hx_i\in\hX_i$ and every bundle of items $B \subseteq G$, 
\[
	\hx_i(B)=\max_{B'\in\Sigma,B'\subseteq B}\hx_i(B').
\] 
In other words, \SigmaVCG allows agents to bid only on elements of $\Sigma$ and derives bids for the other bundles as the maximum bid on a contained bundle. 
It is easy to see that \SigmaVCG is \emph{maximal in range}~\cite{NiRo07}, \ie that it maximizes social welfare over a subset of $\Omega$. It follows that for each agent, truthful projection onto $\Sigma$ is a dominant strategy in \SigmaVCG. 

This shows that simplification can focus attention on a focal (truthful) equilibrium and thus avoid equilibrium selection along a Pareto frontier. As \propref{prop:welfare} and the above examples suggest, this can have a significant positive impact on both social welfare and revenue. 
It does not tell us, of course, how $\Sigma$ should be chosen in practice. One understood approach for maximizing social welfare without any knowledge about the quality of different outcomes, and without consideration to $\Sigma$ being a quasi-field, is to partition $G$ arbitrarily into $m$ sets of roughly equal size, where $m$ is the largest number of bundles agents can bid on. The welfare thus obtained is smaller than the maximum social welfare by a factor of at most $k/\sqrt{m}$~\citep{HKMT04}. If additional knowledge is available, however, it may be possible to improve the result substantially, as the above results comparing the outcomes for different values of $\Sigma$ show. 

With that being said, there are (at least) two additional properties
that are desirable for a simplification: that \SigmaVCG is tight, and
that $\Sigma$ is a quasi-field. Tightness ensures that no additional
equilibria are introduced as compared to the fully expressive VCG
mechanism, such that the quality of the worst equilibrium outcome of
the simplification is no worse than that of the original
mechanism. This can remain important when $\Sigma$ may still be too
large for agents to use its full projection, in which case agents
would again have to select from a large set of possible ex-post
equilibria. By requiring that $\Sigma$ is a quasi-field, in addition to being a dominant strategy equilibrium of \SigmaVCG, truthful projection to $\Sigma$ is also an
ex-post equilibrium of the fully expressive VCG mechanism, and thus
stable against unrestricted unilateral deviations. This ensures that agents do not experience regret,
in the sense 
of being prevented from sending a message they would want to send
given the messages sent by the other agents.

It turns out that these two requirements are actually equivalent, \ie that \SigmaVCG is tight if and only if $\Sigma$ is a quasi-field. The following result holds with respect to both Nash equilibria and ex-post equilibria. 
\begin{theorem}  \label{thm:tight}
	Let $\Sigma\subseteq 2^G$ such that $\emptyset\in\Sigma$. Then, \SigmaVCG is a tight simplification if $\Sigma$ is a quasi-field, and this condition is also necessary if $n\geq 3$. 
\end{theorem}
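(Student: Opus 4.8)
The plan is to establish the two implications separately. For the forward implication (quasi-field $\Rightarrow$ tight) I will verify that \SigmaVCG satisfies the outcome-closure property of \secref{sec:simplifications} and then invoke \propref{prop:tightness}, which gives tightness for both Nash and ex-post equilibria at once. For the converse (assuming $n\ge 3$) I will argue the contrapositive: if $\Sigma$ is not a quasi-field I will exhibit a type profile $\theta$ at which the truthful projection $\theta^\Sigma=(\theta_1^\Sigma,\dots,\theta_n^\Sigma)$---which is always a dominant-strategy, and hence Nash and ex-post, equilibrium of \SigmaVCG because \SigmaVCG is maximal in range---fails to be an equilibrium of the fully expressive VCG mechanism, so it witnesses a failure of tightness.

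\emph{Sufficiency.} Fix $\theta$, an agent $i$, a profile $\hx_{-i}\in\hX_{-i}$, and an arbitrary $x_i\in X_i$; I must produce $\hx_i\in\hX_i$ with $u_i((\hx_i,\hx_{-i}),\theta_i)\ge u_i((x_i,\hx_{-i}),\theta_i)$. Let $o=f(x_i,\hx_{-i})$ and $B=o_i$. Since each $\hx_j$ with $j\ne i$ satisfies $\hx_j(B'')=\max_{B'\in\Sigma,\,B'\subseteq B''}\hx_j(B')$ for all $B''\subseteq G$, for each $j\ne i$ there is $B'_j\in\Sigma$ with $B'_j\subseteq o_j$ and $\hx_j(B'_j)=\hx_j(o_j)$; the $B'_j$ are pairwise disjoint and disjoint from $B$. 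As $\Sigma$ is closed under disjoint unions, $U:=\bigcup_{j\ne i}B'_j\in\Sigma$, and then closure under complementation gives $C:=G\setminus U\in\Sigma$; note $B\subseteq C$. Let $\hx_i\in\hX_i$ be the message that bids a large enough constant on every bundle in $\Sigma$ containing $C$ and $0$ on every other bundle in $\Sigma$. Under $(\hx_i,\hx_{-i})$ the VCG allocation gives agent $i$ a bundle containing $C$, and among such allocations giving agent $i$ exactly $C$ is welfare-maximal since it leaves the others the most room; because the $B'_j$ are a feasible packing inside $G\setminus C=U$, the first term of agent $i$'s Clarke price is unchanged (it never depends on $i$'s own bid) while the subtracted term can only increase. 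Hence agent $i$'s price does not go up, and with $v_i(C,\theta_i)\ge v_i(B,\theta_i)$ by free disposal we obtain $u_i((\hx_i,\hx_{-i}),\theta_i)\ge u_i((x_i,\hx_{-i}),\theta_i)$, which is outcome closure.

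\emph{Necessity.} Assume $n\ge 3$ and that $\Sigma$ is not a quasi-field. If $\Sigma$ is not closed under complementation, pick $B\in\Sigma$ with $B^c\notin\Sigma$, so $\emptyset\ne B^c\subsetneq G$ (the degenerate case $B=\emptyset$, i.e.\ $G\notin\Sigma$, is even simpler). Let agent $1$ have the single-minded valuation $v_1(X,\theta_1)=2$ if $B^c\subseteq X$ and $0$ otherwise, let agent $2$ have $v_2(X,\theta_2)=1$ if $B\subseteq X$ and $0$ otherwise, and let all remaining agents be null. Under $\theta^\Sigma$ in \SigmaVCG agent $1$ can win only a $\Sigma$-superset of $B^c$, which necessarily meets $B$ and so shuts agent $2$ out; computing the Clarke price gives agent $1$ utility $1$. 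In the expressive VCG mechanism agent $1$ may instead bid directly on $B^c$, the efficient allocation then assigns $B^c$ to agent $1$ and $B$ to agent $2$, and agent $1$'s utility rises to $2$---a profitable deviation, so \SigmaVCG is not tight. If instead $\Sigma$ is closed under complementation but not under disjoint union, pick disjoint $B,C\in\Sigma$ with $D:=B\cup C\notin\Sigma$; then $G,B^c,C^c\in\Sigma$, $\emptyset\ne D^c\notin\Sigma$, and $B,C,D^c$ partition $G$. Let agent $1$ value supersets of $D^c$ at $2$, agent $2$ value supersets of $B$ at $1$, agent $3$ value supersets of $C$ at $1$, and any further agents be null. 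In the expressive VCG mechanism truthful bidding assigns $D^c,B,C$ to agents $1,2,3$ and leaves agent $1$ with utility $2$. In \SigmaVCG under $\theta^\Sigma$ every $\Sigma$-superset of $D^c$ meets $B$ or $C$ (otherwise it would equal $D^c\notin\Sigma$), so agent $1$ can win only $B^c=C\cup D^c$, $C^c=B\cup D^c$, or $G$, and in every efficient outcome agent $1$'s utility turns out to be just $1$; again $\theta^\Sigma$ is not an equilibrium of VCG. This second case genuinely needs three live agents---precisely where $n\ge 3$ enters---whereas for $n=2$ mere closure under complementation already yields tightness.

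\emph{Main obstacle.} The load-bearing idea in the sufficiency direction is the quasi-field step $U\in\Sigma\Rightarrow C=U^c\in\Sigma$ combined with $B\subseteq C$, which lets a restricted deviation dominate any unrestricted one; the remainder there is bookkeeping with the Clarke pivot. I expect the delicate part to be necessity: one must carefully compute the projected valuations $\theta_i^\Sigma$ of the agents who value non-$\Sigma$ bundles, pin down the efficient outcome of \SigmaVCG under $\theta^\Sigma$ (including resolving ties among minimal $\Sigma$-supersets), and verify the Clarke prices in both mechanisms so that the deviation gives a \emph{strict} gain---and one must check that the padding null agents, and the degenerate sub-case $G\notin\Sigma$, do not disturb these calculations.
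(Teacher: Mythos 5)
Your proof is correct, but it takes a genuinely different route from the paper's. The paper treats both directions as corollaries of the Holzman et al.\ characterization: for sufficiency it takes $\hx_i=\theta_i^\Sigma$, observes that $\hx_{-i}$ is the $\Sigma$-projection of \emph{some} type profile, and invokes the cited result that projection onto a quasi-field is an ex-post equilibrium of the fully expressive VCG mechanism; for necessity it invokes the converse of that result (which is where $n\geq 3$ enters) together with the same observation you make, namely that $\theta^\Sigma$ is a dominant-strategy equilibrium of \SigmaVCG because the mechanism is maximal in range. You instead re-derive both ingredients from scratch. Your sufficiency argument constructs, for each unrestricted deviation $x_i$, a tailored restricted response (a large bid on $\Sigma$-supersets of $C=\bigl(\bigcup_{j\neq i}B_j'\bigr)^c$), using closure under disjoint union and complementation exactly once each; this is more self-contained but yields a witness that depends on $x_i$, whereas the paper's witness $\theta_i^\Sigma$ is uniform. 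Your necessity argument constructs explicit counterexamples for the two failure modes of the quasi-field condition, in effect reproving the needed direction of the Holzman et al.\ theorem; your computations of the projections, allocations, and Clarke prices check out, and you correctly locate where $n\geq 3$ is needed (the disjoint-union case) versus where two agents suffice (the complementation case). One small imprecision: in the second counterexample agent~$1$ is not literally restricted to winning $B^c$, $C^c$, or $G$, since $\Sigma$ may contain other supersets of $D^c$; but every $\Sigma$-superset of $D^c$ meets $B$ or $C$, so at most one of agents $2,3$ can be served alongside agent~$1$, his Clarke payment is at least $1$, and the conclusion stands. What each approach buys: the paper's proof is short and transparent given the cited characterization; yours is longer but self-contained and makes visible exactly which closure properties of $\Sigma$ do the work at each step.
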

\begin{proof}
	For the direction from right to left, assume that $\Sigma$ is a quasi-field.  By \propref{prop:tightness} it suffices to show that \SigmaVCG satisfies outcome closure. Fix valuation functions $v_j$ and types $\theta_j$ for all $j\in N$, and consider an arbitrary agent $i\in N$.  We claim that for every $x_i\in X_i$ and every $\hx_{-i}\in\hX_{-i}$, 
\[
	u_i((\theta_i^\Sigma,\hx_{-i}),\theta_i)\geq u_i((x_i,\hx_{-i}),\theta_i).
\]
To see this, observe that there exists $\hat{\theta}_{-i}\in\Theta_{-i}$ such that $\hat{\theta}_{-i}^\Sigma=\hx_{-i}$, and consider the type profile $(\theta_i,\hat{\theta}_{-i})$. \citet{HKMT04} have shown that the projection of the true types to a quasi-field $\Sigma$ is an ex-post equilibrium of the (fully expressive) VCG mechanism. Thus, in particular, $\theta_i^\Sigma$ is a best response to $\hat{\theta}_{-i}^\Sigma=\hx_{-i}$, which proves the claim. 

For the direction from left to right, assume that $\Sigma$ is not a quasi-field. \citet{HKMT04} have shown that in this case, $\theta^\Sigma$ is not an ex-post equilibrium of the VCG mechanism. On the other hand, $\theta^\Sigma$ is a dominant-strategy equilibrium of \SigmaVCG, because \SigmaVCG is maximal in range. This shows that \SigmaVCG is not tight. 
\end{proof}

Together with \thmref{thm:hm04}, this yields a characterization of the ex-post equilibria of \SigmaVCG for the case when $\Sigma$ is a quasi-field. 
\begin{corollary}\label{cor:sigma}
	Let $\Sigma$ be a quasi-field.  Then, $\hx$ is an ex-post equilibrium of \SigmaVCG under variable participation if and only if $\hx=\theta^{\Sigma'}$ for some quasi-field $\Sigma' \subseteq \Sigma$. 
\end{corollary}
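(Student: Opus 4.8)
The plan is to combine the two characterizations already in hand. \thmref{thm:hm04} identifies the ex-post equilibria under variable participation of the fully expressive VCG mechanism as precisely the projections $\theta\mapsto\theta^{\Sigma'}$ to quasi-fields $\Sigma'$, and \thmref{thm:tight} (its ``if'' direction, which carries no hypothesis on the number of agents) says that whenever $\Sigma$ is a quasi-field, \SigmaVCG is a tight simplification of VCG. The bridge between the two is that tightness may be applied to \SigmaVCG restricted to an arbitrary sub-coalition $N'\subseteq N$, and that passing to a smaller message space never destroys an equilibrium.

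For the direction from left to right, suppose $\hx$ is an ex-post equilibrium of \SigmaVCG under variable participation, so for every $N'\subseteq N$ its restriction to $N'$ is an equilibrium of every \SigmaVCG mechanism for $N'$. By the ``if'' direction of \thmref{thm:tight} applied to each such sub-mechanism, this restriction is also an equilibrium of the corresponding fully expressive VCG mechanism, so $\hx$ is an ex-post equilibrium of VCG under variable participation. \thmref{thm:hm04} now gives a quasi-field $\Sigma'\subseteq 2^G$ with $\hx_i(\theta_i)=\theta_i^{\Sigma'}$ for all $i$ and $\theta_i$, and it remains to show $\Sigma'\subseteq\Sigma$. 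If some $B\in\Sigma'\setminus\Sigma$, take the single-minded type $\theta_i$ with $v_i(C,\theta_i)=1$ for $B\subseteq C$ and $0$ otherwise (admissible under free disposal); one checks that $\theta_i^{\Sigma'}(C)=1$ exactly when $B\subseteq C$. But then the identity $\hx_i(C)=\max_{B''\in\Sigma,\,B''\subseteq C}\hx_i(B'')$ that every message of \SigmaVCG must satisfy fails at $C=B$, since its left-hand side is $1$ while every $B''\in\Sigma$ with $B''\subseteq B$ is a proper subset of $B$ and so contributes $0$ on the right. This contradicts $\hx$ being a strategy profile of \SigmaVCG, so $\Sigma'\subseteq\Sigma$.

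For the direction from right to left, let $\Sigma'\subseteq\Sigma$ be a quasi-field. Since $\Sigma'\subseteq\Sigma$, each $\theta_i^{\Sigma'}$ already lies in the restricted message set $\hX_i$, so $\theta^{\Sigma'}$ is a legitimate strategy profile of \SigmaVCG. By \thmref{thm:hm04} it is an ex-post equilibrium of the fully expressive VCG mechanism under variable participation; for every $N'\subseteq N$ the restriction is therefore an equilibrium of every VCG mechanism for $N'$ against all unilateral deviations in $X_i$, and a fortiori against the smaller deviation set $\hX_i$. Hence $\theta^{\Sigma'}$ is an ex-post equilibrium of \SigmaVCG under variable participation.

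The only step that is not a mechanical application of the two quoted theorems is the containment $\Sigma'\subseteq\Sigma$, and I expect this to be the main obstacle: one has to show that a $\Sigma'$-projection that happens to be expressible as a $\Sigma$-projection for \emph{every} type forces $\Sigma'\subseteq\Sigma$, which is exactly what the single-minded valuation above accomplishes. Everything else reduces to the observation that shrinking either the message space or the set of participating agents cannot turn an equilibrium into a non-equilibrium.
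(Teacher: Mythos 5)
Your proof is correct and follows the same route the paper intends: combine the tightness of \SigmaVCG for quasi-fields (\thmref{thm:tight}, applied coalition-by-coalition) with the characterization of ex-post equilibria of the fully expressive VCG mechanism (\thmref{thm:hm04}). The paper leaves the containment $\Sigma'\subseteq\Sigma$ implicit, and your single-minded-valuation argument supplies that missing detail cleanly.
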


Since a quasi-field of size $m$ can contain at most $\log m$ bundles
that are pairwise disjoint, insisting that a simplification be tight
does come at a cost, decreasing the worst-case social welfare in the
truthful projection by an additional factor of up to $\sqrt{m/\log
m}$. Still, as discussed above, tightness brings other advantages to
the simplified mechanism.

\section{The Role of Information}  \label{sec:information}

Like most of the literature on sponsored search auctions, we have assumed that agents have complete information about each others' valuations for the different slots. It turns out that this assumption is crucial, and that the positive results for \aGSP do not extend to incomplete information settings. 

In particular, \aGSP admits an ex-post equilibrium only in very degenerate cases! This complements a result of \citet{GoSw09a}, who showed that \aGSP has an efficient Bayes-Nash equilibrium 
 on type space $\Theta^\alpha$ if and only if $\alpha$ decreases sufficiently quickly. Of course, ex-post Nash equilibrium is stronger than Bayes-Nash equilibrium, but on the other hand our result precludes the existence of the former for a significantly larger type space. The proof of the following theorem is given in \appref{app:gsp_incomplete}. 
\begin{theorem}\label{thm:gsp_incomplete}
	Let $\alpha,\beta\in\ctrf$. Then, \aGSP has an efficient ex-post equilibrium on type space $\Theta^\beta$ if and only if $n\leq 2$ or $k\leq 1$, even if $\alpha=\beta$. 
\end{theorem}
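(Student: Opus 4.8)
The plan is to handle the two directions separately. The ``if'' direction is easy: if $n\leq 2$ or $k\leq 1$, then the GSP mechanism essentially reduces to a single second-price auction (or is trivial), and truthful bidding $x_{i,j}=\beta_j\cdot v_i(\theta_i)$ — or rather, the simplification-compatible scaled bid — yields the efficient outcome and is an ex-post equilibrium, since in a lone second-price auction for the top slot no agent can profitably deviate regardless of others' types. I would check that the required bids lie in $\hX$ (the message space of \aGSP), which they do since they are scalar multiples of $\alpha$; the value-generating $\beta$ only affects payoffs, and the single-unit Vickrey logic is robust to ex-post deviations. For $k\le 1$ there is one slot and the argument is the classical second-price ex-post claim; for $n\le 2$ with $k=2$, agent behavior in the second auction is degenerate since only one agent remains.

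For the ``only if'' direction, assume $n\geq 3$ and $k\geq 2$, and suppose for contradiction that $s$ is an efficient ex-post equilibrium of \aGSP on $\Theta^\beta$. The strategy of each agent $i$ is a function $s_i:\Theta_i\to\hX_i$, which (since $\hX_i$ is one-dimensional, parameterized by a scalar $b_i\geq 0$ with bid vector $(\alpha_1 b_i,\dots,\alpha_k b_i)$) amounts to choosing a single number $b_i(\theta_i)$ as a function of $v_i(\theta_i)$ alone. Efficiency forces the ranking of the $b_i$ to agree with the ranking of the $v_i(\theta_i)$ for every type profile, so $b_i$ must be a common strictly increasing function $g$ of $v_i(\theta_i)$, the same for all agents (a standard argument: if two agents used different bid functions one could find a profile where the higher-value agent bids lower, breaking efficiency). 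The core of the argument is then to write down the ex-post incentive constraints for an agent who is assigned slot~$1$ and for one assigned slot~$2$: fixing the competitors' types, the agent winning slot~$2$ at price $\alpha_2 g(v_3)$ must not want to jump to slot~$1$ at price $\alpha_1 g(v_2)$, and the agent in slot~$1$ must not want to drop to slot~$2$. These two inequalities, required to hold simultaneously for \emph{all} type profiles (in particular letting the third-highest value vary freely while the top two are held fixed), pin down $g$ up to an essentially linear form and then contradict the constraint coming from yet another configuration — concretely, I expect that varying $v_3$ while $v_1,v_2$ are fixed forces $g$ to be both ``steep enough'' and ``flat enough'' on overlapping ranges, which is impossible. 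Invoking \propref{prop:milgrom} or its proof: the condition there characterizing when the VCG outcome is sustainable already hints that only very special $\{\alpha_j\}$ allow it even under complete information; under incomplete information the requirement must hold for \emph{all} competing type vectors simultaneously, which collapses the feasible set to $n\le 2$ or $k\le 1$.

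The main obstacle I anticipate is making rigorous the step that a single common bid function $g$ is forced and then extracting the contradiction from the family of ex-post constraints: one must carefully choose, for a fixed pair of top values, a range of third-highest values and use the fact that the same $g$ controls both the price faced in slot~$2$ and the bid used were the agent to deviate to slot~$1$; the inequalities point in opposite directions as $v_3$ ranges over an interval, yielding the contradiction. Handling boundary cases — ties, agents with value~$0$, and the interplay between $k<n$ versus $k=n$ — will require some care, as will confirming that the deviation to a non-assigned slot is actually available within the one-dimensional message space $\hX_i$ (it is, since deviating means choosing a different scalar $b_i$, which shifts the agent's entire bid vector and hence which slot GSP assigns). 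I would also double check that the ``even if $\alpha=\beta$'' clause holds: the contradiction is driven purely by the $\beta$-valuations and the $\alpha$-scaled bids, and setting $\alpha=\beta$ does not relax any of the constraints, so the impossibility persists.
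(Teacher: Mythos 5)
There is a genuine gap in your ``if'' direction. For $n=2$ and $k\geq 2$, truthful (scaled) bidding is \emph{not} an ex-post equilibrium, and the ``lone second-price auction'' logic does not apply: the loser of slot~$1$ still receives slot~$2$ at price zero, so his outside option is $\beta_2\cdot v_i(\theta_i)>0$ rather than $0$. With truthful bids $b_i=\beta_1 v_i(\theta_i)/\alpha_1$ the winner of slot~$1$ pays $\beta_1 v_2(\theta_2)$ and gets utility $\beta_1(v_1(\theta_1)-v_2(\theta_2))$, which is strictly less than the $\beta_2 v_1(\theta_1)$ he could get by shading his bid below his opponent's and taking slot~$2$ for free (take $\beta_2$ close to $\beta_1$ and $v_2$ close to $v_1$). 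The equilibrium that actually exists uses the heavily shaded bids $s_i(\theta_i)=(\beta_1-\beta_2)/\alpha_1\cdot v_i(\theta_i)$, which make the slot-$1$ price exactly $(\beta_1-\beta_2)v_2(\theta_2)$ and restore the required indifference in both directions; the paper shows this is in fact the \emph{unique} efficient ex-post equilibrium for $n=2$. This uniqueness is not a side remark: it is the engine of the ``only if'' direction.

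Your ``only if'' sketch is in the right spirit but stops short of the contradiction. The clean route (and the paper's) is: restrict attention to type profiles where only agents $1$ and $2$ have positive value; ex-post-ness forces $s_1,s_2$ to equal the shaded form above, since the two-agent analysis applies verbatim. A separate restriction shows $s_3(\theta_3)>0$ whenever $v_3(\theta_3)>0$. Now take $v_1=v_2=v$ and $v_3$ small and positive: agent~$2$ in slot~$2$ pays a strictly positive price (driven by agent~$3$'s bid), whereas overtaking slot~$1$ costs only $(\beta_1-\beta_2)v$ and yields utility exactly $\beta_2 v$, strictly beating his current utility. Your proposed mechanism --- varying $v_3$ so that the slot-$1$/slot-$2$ incentive constraints ``point in opposite directions'' --- can be made to work (the two constraints force $\alpha_1 g(v)=(\beta_1-\beta_2)v+\alpha_2 g(v_3)$ for a range of $v_3$, so $g$ must be constant there, contradicting the strict monotonicity that efficiency requires), but as written it is a conjecture about what the algebra will yield, not a proof; and the appeal to \propref{prop:milgrom} is only suggestive, since that proposition concerns complete-information Nash equilibria and a different property (Vickrey preservation). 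The ``even if $\alpha=\beta$'' clause is handled correctly: the obstruction is independent of the relation between $\alpha$ and $\beta$.
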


The \aVCG mechanism does only slightly better: it has an ex-post equilibrium only when it allows agents to bid truthfully (it is thus not a meaningful simplification, in that the language is in this sense exact). The proof of this result is given in \appref{app:vcg_incomplete}. 
\begin{theorem}\label{thm:vcg_incomplete}
	Let $\alpha,\beta\in\ctrf$. Then, \aVCG has an efficient ex-post equilibrium on type space $\Theta^\beta$ if and only if $\alpha=\beta$, or $n\leq 2$, or $k\leq 1$. 
\end{theorem}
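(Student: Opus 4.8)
The plan is to establish the two directions of the equivalence separately, the ``if'' direction by explicit construction and the ``only if'' direction by deriving a contradiction from the structure that an efficient ex-post equilibrium would be forced to have.

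\emph{``If'' direction.} There are three cases. If $\alpha=\beta$, every message $(\alpha_1 b,\dots,\alpha_k b)\in\hX_i$ is a bid vector of the form $(\beta_j b)_j$, and bidding $b=v_i(\theta_i)$ reports the true valuation; hence \aVCG under truthful play coincides with the fully expressive VCG mechanism, so truthful bidding is a dominant-strategy (a fortiori ex-post) equilibrium and the outcome is efficient. If $k\le 1$ the mechanism is just a second-price auction (indeed $\alpha=\beta=(1,0,\dots,0)$ then), and the same argument applies. If $n\le 2$, I would exhibit the common linear bid rule $s_i(\theta_i)=\rho\, v_i(\theta_i)$ with $\rho=(1-\beta_2)/(1-\alpha_2)$ (reading $\alpha_2=\beta_2=0$ when $k=1$); since $\rho>0$ the induced allocation is efficient, and one verifies directly that for the single nontrivial slot swap the two incentive inequalities hold with exactly this $\rho$, so this is an efficient ex-post equilibrium.

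\emph{``Only if'' direction.} Suppose $n\ge 3$, $k\ge 2$, $\alpha\neq\beta$, and, for contradiction, that $(s_1,\dots,s_n)$ is an efficient ex-post equilibrium. First I would argue that efficiency forces a common, non-decreasing bid function: because types are unconstrained, any multiset of values can be assigned to the agents in any order, and efficiency demands that the larger-valued agent always be ranked above the smaller-valued one; chaining such comparisons through a third agent (here $n\ge 3$ is used) forces $s_i=s_j=:s$ on $\reals_{>0}$ with $s$ non-decreasing. Next I would write out the \aVCG prices along the efficient assignment: ordering agents by $v_1(\theta_1)>\dots>v_n(\theta_n)$ and putting $b_i=s(v_i(\theta_i))$, the price of the agent in slot $i$ is $\sum_{m=i}^{k}(\alpha_m-\alpha_{m+1})b_{m+1}$ (with $\alpha_{k+1}=0$ and $b_{k+1}=0$ when $n=k$), so moving this agent up into slot $i-1$ raises the payment by exactly $(\alpha_{i-1}-\alpha_i)b_{i-1}$ and moving down into slot $i+1$ lowers it by $(\alpha_i-\alpha_{i+1})b_{i+1}$. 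The constraints ``the occupant of slot $j$ does not want slot $j+1$'' and ``the occupant of slot $j+1$ does not want slot $j$'' then read, for all values $w>w'>0$ placed in those two slots,
\[
	(\alpha_j-\alpha_{j+1})\,s(w')\le(\beta_j-\beta_{j+1})\,w \qquad\text{and}\qquad (\beta_j-\beta_{j+1})\,w'\le(\alpha_j-\alpha_{j+1})\,s(w).
\]
Letting $w'\uparrow w$ in the first and $w\downarrow w'$ in the second collapses both to $s(v)=\rho_j v$ with $\rho_j=(\beta_j-\beta_{j+1})/(\alpha_j-\alpha_{j+1})>0$; since $s$ is a single function, all $\rho_j$ equal a common $\rho>0$, and telescoping yields $\beta_m=1-\rho(1-\alpha_m)$ for $m\le k$.

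To finish, I would bring in the bottom of the ranking: when $n>k$, an unserved agent whose value sits just below that of the slot-$k$ occupant must not want to enter slot $k$, and the slot-$k$ occupant must not want to drop out; letting the two values coincide yields $\rho=\beta_k/\alpha_k$ from both directions, which together with $\beta_k=1-\rho(1-\alpha_k)$ forces $\rho=1$, hence $\beta_m=\alpha_m$ for all $m$ --- contradicting $\alpha\neq\beta$. The main obstacle is exactly this step in combination with the earlier reduction: establishing that an efficient ex-post equilibrium must use a single homogeneous linear bid rule requires careful use of the freedom to permute values across agents and of $n\ge 3$, and then one must assemble precisely the right family of slot-swap and participation constraints to over-determine the slope. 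The boundary case $n=k$ (no unserved agent, and the slot-$k$ occupant pays nothing) requires a separate adaptation of this last argument, since there the needed second expression for $\rho$ must come from deviations available to the top agents rather than from an unserved agent.
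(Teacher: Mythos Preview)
Your ``if'' direction matches the paper's (the paper also exhibits the linear rule with slope $(\beta_1-\beta_2)/(\alpha_1-\alpha_2)$ for $n=2$).

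Your ``only if'' direction is a genuinely different route. Two differences matter. First, for symmetry the paper does not use a ranking/chaining argument: it fixes the other agents' values high so that the two agents in question compete for the last two slots, and then invokes the just-established uniqueness of the $n=2$ equilibrium, which forces $s_i=s_j$ outright. Your chaining sketch, as written, only yields $s_i(w)\ge s_j(w')$ for all $w>w'$ (and symmetrically), which identifies $s_i$ with $s_j$ only modulo a continuity argument you have not supplied. Second, and more substantively, the paper never derives the linear form of $s$ and then over-determines the slope. Instead it uses the normalizations $\alpha_1=\beta_1=1$ and $\alpha_{k+1}=\beta_{k+1}=0$ to locate an index $i>1$ at which the increments $\alpha_j-\alpha_{j+1}$ and $\beta_j-\beta_{j+1}$ first change their order, and then splits on whether $s_*(v)$ exceeds $\rho_i v$ with $\rho_i=(\beta_i-\beta_{i+1})/(\alpha_i-\alpha_{i+1})$; in each branch it writes down a profile with $i$ or $i{+}1$ equal-valued agents and an explicit profitable deviation (dropping to slot $i{+}1$, or jumping to slot $1$). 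Your approach is more structural and makes the shape of the equilibrium strategy explicit; the paper's is a direct deviation construction that needs only a single comparison of $s_*(v)$ against one threshold rather than the full family $\{\rho_j\}$.

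On your acknowledged gap at $n=k$: the proposed patch (``deviations available to the top agents'') will not produce a second, independent equation for $\rho$. Once $s(v)=\rho v$ satisfies the adjacent-swap constraints for $j=1,\dots,k-1$, every multi-slot deviation constraint for an already-seated agent is a nonnegative combination of those and is automatically met; no ``top-down'' move yields new information. The extra determination of $\rho$ in your argument---and, equivalently, the crossing at the boundary index in the paper's---relies on the slot-$k$ boundary together with an agent beyond slot $k$, so that step is really using $n>k$.
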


These results indicate that simplification is not very useful in sponsored search auctions without the assumption of complete information amongst bidders. Interestingly, simplification is also not necessary in this case to preclude equilibria with bad properties, at least in the special case of our model where the valuations are proportional to some (possibly unknown) vector of value-generating click-through rates. In particular, payments in \emph{every} efficient ex-post equilibrium of the fully expressive VCG mechanism equal the VCG payments.  Moreover, truthful reporting is the only efficient ex-post equilibrium when the number of agents is greater than the number of slots. The proof of this result is given in \appref{app:expostofvcg}. 
\begin{proposition}\label{prop:expostofvcg}
	Consider a VCG sponsored search auction with type profile $\theta\in\Theta^\alpha$, and assume that $s=(s_1,\dots,s_n)$ is an efficient ex-post equilibrium. Then, for all $i$ with $1\leq i\leq k$, the payment for slot $i$ in the outcome for strategy profile $s$ equals $p_i=\sum_{j=i}^{\min(k,n-1)}((\alpha_j-\alpha_{j+1})\cdot v_{j+1}(\theta_{j+1}))$. 
	Moreover, if $n>k$, then $s_{i}(k,\theta_i)=\alpha_{k}\cdot v_i(\theta_i)$ for all $i$ with $1\leq i\leq n$.
\end{proposition}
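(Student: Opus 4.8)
The plan is to prove the payment formula first, by a revenue-equivalence argument applied to the direct-revelation mechanism induced by $s$, and then to derive the bid identity from it. Relabel the agents so that $v_1(\theta_1)\ge\dots\ge v_n(\theta_n)$, abbreviate $v_i:=v_i(\theta_i)$, and recall that $k\le n$ and that, $s$ being efficient, the induced outcome is the efficient assignment (agent $i$ to slot $i$ for $i\le k$, no slot otherwise) for \emph{every} type profile in $\Theta^\alpha$. Two elementary properties of the auction's VCG payments will be used throughout, both relying on the convention that a bid on a ``non-slot'' is $0$: (i) payments are non-negative, since the chosen allocation $f(x)$ is itself a candidate in $\max_{o}\sum_{j\ne i}x_{j,o_j}$; and (ii) an agent assigned slot $m$ pays at most $x_{i,m}$, since $\max_o\sum_{j\ne i}x_{j,o_j}\le\max_o\sum_j x_{j,o_j}=\sum_j x_{j,f_j(x)}$, and in particular an unassigned agent pays $0$.

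For the first claim, note that the direct mechanism sending $\theta'\mapsto(f(s(\theta')),p(s(\theta')))$ is ex-post incentive compatible (because $s$ is an ex-post equilibrium) and efficient (because $s$ is). Fix $i$ and the others' types $\theta_{-i}$, and regard agent $i$'s report as a single scalar, his per-click value $v$, which is all that distinguishes his types in $\Theta^\alpha$. Writing $w_1\ge\dots\ge w_{n-1}$ for the sorted per-click values of the others, efficiency forces agent $i$ to receive slot $r$ exactly for $v\in(w_r,w_{r-1})$ (with $w_0=\infty$) and no slot for $v<w_k$, so his effective click rate is a nondecreasing step function of $v$. Incentive compatibility between two reports falling in the same step forces equal payment, so agent $i$'s payment depends only on the slot he gets; call it $P_r$. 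Incentive compatibility between a report just above and one just below the threshold $w_r$ separating slots $r$ and $r+1$ gives $(\alpha_r-\alpha_{r+1})v'\le P_r-P_{r+1}\le(\alpha_r-\alpha_{r+1})v$; letting both reports tend to $w_r$ yields $P_r-P_{r+1}=(\alpha_r-\alpha_{r+1})\,w_r$. The payment of an agent reporting value $0$ is $0$ by (i) together with individual rationality (he can opt out for utility $0$, so pays at most his value), which is exactly the boundary term (no slot when $n>k$, slot $k$ when $n=k$). Telescoping gives $P_r=\sum_{j=r}^{\min(k,n-1)}(\alpha_j-\alpha_{j+1})\,w_j$, and specializing to $\theta$ — where agent $i$ is ranked $i$-th, so $w_j=v_{j+1}(\theta_{j+1})$ for $j\ge i$ — gives $p_i=\sum_{j=i}^{\min(k,n-1)}(\alpha_j-\alpha_{j+1})\,v_{j+1}(\theta_{j+1})$.

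For the second claim assume $n>k$, fix agent $i$, and put $c:=v_i(\theta_i)$ and $b:=s_i(k,\theta_i)$. To bound $b$ from above, consider a profile $\theta'\in\Theta^\alpha$ in which agent $i$ has his original type $\theta_i$ while $k$ other agents have per-click value above $c$ and any remaining agents have value below $c$; then agent $i$ is ranked $(k+1)$-th and unassigned, and by the first claim the price of slot $k$ under $\theta'$ is $\alpha_k c$, since $c$ is the $(k+1)$-st largest per-click value in $\theta'$. That price is the VCG payment of the agent in slot $k$, hence at least the gain, under the other agents' bids, of the reassignment that keeps the top $k-1$ agents in slots $1,\dots,k-1$ and moves agent $i$ into slot $k$; that gain is exactly agent $i$'s bid $b$ on slot $k$, so $b\le\alpha_k c$. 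For the reverse inequality, consider a profile in which agent $i$ is the $k$-th ranked agent with exactly one agent of per-click value $v'<c$ below him; then agent $i$ occupies slot $k$, the price of slot $k$ under this profile is $\alpha_k v'$ by the first claim, and this price is at most $b$ by property (ii). Letting $v'\uparrow c$ gives $b\ge\alpha_k c$, hence $b=\alpha_k c$.

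The main obstacle is the first claim and, within it, setting up the single-parameter reduction so that the revenue-equivalence step is rigorous: one must check that agent $i$'s slot and payment depend on his type only through $v_i(\theta_i)$ (slot by efficiency, payment by incentive compatibility), and that the boundary term is correctly identified in both the $n>k$ and $n=k$ regimes; the telescoping, the two VCG facts, and the limiting arguments in the second claim are then routine. The one place requiring care throughout is the bookkeeping convention that bids on non-slots are $0$, which is what makes properties (i) and (ii) — and hence the ``an unassigned agent pays zero'' and ``an agent never overpays for a slot'' statements — hold even though bids may in principle be negative.
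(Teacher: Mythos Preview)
Your argument is correct and takes a genuinely different route from the paper's. The paper pins down the bid \emph{differences} first: by placing some agent $i'$ adjacent to $i$ and reading off $i'$'s incentive constraints as $v_{i'}\to v_i$, it obtains $s_i(j,\theta_i)-s_i(j+1,\theta_i)=(\alpha_j-\alpha_{j+1})\,v_i(\theta_i)$ for every $j$; the payment formula then drops out of the VCG externality sum $p_i=\sum_{j\ge i}\bigl(s_{j+1}(j,\theta_{j+1})-s_{j+1}(j+1,\theta_{j+1})\bigr)$, and the identity $s_i(k,\theta_i)=\alpha_k v_i(\theta_i)$ is simply the boundary of the recursion when $n>k$. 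You instead treat the induced direct mechanism as a single-parameter ex-post IC mechanism and derive the payments via the Myerson payment identity, bypassing bids entirely; only afterwards do you recover $s_i(k,\theta_i)$ by sandwiching it between two VCG payment bounds evaluated at carefully chosen profiles. What your route buys is modularity---the payment characterization uses only efficiency and ex-post incentive compatibility, so it would carry over unchanged to any mechanism with those two properties. What the paper's route buys is more information: it determines each agent's entire bid vector up to an additive constant (and exactly, when $n>k$), not just the bid on slot $k$.
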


In considering simplifications for combinatorial auctions, we have adopted the standard approach to assume incomplete information amongst agents, and in particular discussed a family of simplifications of the VCG mechanism that offers a tradeoff between social welfare and the amount of information agents have to communicate. 

One may wonder why this tradeoff is necessary, and in how far it depends on the amount of information agents have about each others' types. It turns out that in the complete information case a much smaller number of bids is enough to preserve an efficient equilibrium, and in fact \emph{all} equilibria, of the fully expressive VCG mechanism. 

We show this using a simplification of the VCG mechanisms that we call \nVCG. The message space of \nVCG consists of all bid vectors with at most $n$ non-zero entries, where $n$ is the number of agents. This reduces the number of bids elicited from each agent from $2^k$ to at most $n$, which can be exponentially smaller. Surprisingly, this simplification is both tight and total, \ie the set of Nash equilibria is completely unaffected by this restriction of the message space. The proof of this result is given in \appref{app:nVCG}.
\begin{theorem}\label{thm:nVCG}
The $n$-VCG mechanism is tight and total with respect to Nash equilibria. 
\end{theorem}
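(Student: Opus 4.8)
I would prove the two properties separately, both times exploiting one structural fact about VCG. Fix an agent $i$ and the messages $x_{-i}$ of the others (throughout, a message for an agent is an assignment of values to at most some number of bundles, the bid on any other bundle being the maximum declared value over contained bundles, as for \SigmaVCG). The outcome enters agent $i$'s utility only through the bundle $B$ he is allocated, and by the Clarke pivot formula his payment for receiving $B$ is $M_i(x_{-i})-V_{-i}(B,x_{-i})$, where $V_{-i}(B,x_{-i})=\max\{\sum_{j\ne i}x_j(o'_j):o'\in\Omega,\ o'_i=B\}$ is the best reported welfare the others can get from $G\setminus B$ and $M_i(x_{-i})=\max_{o'}\sum_{j\ne i}x_j(o'_j)=V_{-i}(\emptyset,x_{-i})$ does not involve agent $i$'s message at all. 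Hence if agent $i$ is allocated $B$ his utility is $v_i(B,\theta_i)-M_i(x_{-i})+V_{-i}(B,x_{-i})$; since $V_{-i}(\cdot,x_{-i})$ is non-increasing with respect to set inclusion, a message declaring a sufficiently large value on a single target bundle $B^\dagger$ (the all-zero message if $B^\dagger=\emptyset$) causes agent $i$ to be allocated $B^\dagger$ or a superset $B'$ with $V_{-i}(B',x_{-i})=V_{-i}(B^\dagger,x_{-i})$, which by free disposal of valuations yields utility at least $v_i(B^\dagger,\theta_i)-M_i(x_{-i})+V_{-i}(B^\dagger,x_{-i})$. With the trivial upper bound $\max_B[v_i(B,\theta_i)+V_{-i}(B,x_{-i})]-M_i(x_{-i})$ on the utility obtainable against $x_{-i}$ with \emph{any} message, this shows the best-response value against $x_{-i}$ is exactly $\max_B[v_i(B,\theta_i)+V_{-i}(B,x_{-i})]-M_i(x_{-i})$ and is realized by a message with a single non-zero entry.

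\textbf{Tightness.} By \propref{prop:tightness} it suffices to check outcome closure. Given $\hx_{-i}\in\hX_{-i}$ and $x_i\in X_i$, take $B^\dagger$ maximizing $v_i(B,\theta_i)+V_{-i}(B,\hx_{-i})$ and let $\hx_i\in\hX_i$ declare a large value on $B^\dagger$ only; by the first paragraph $u_i((\hx_i,\hx_{-i}),\theta_i)$ equals the best-response value, hence is at least $u_i((x_i,\hx_{-i}),\theta_i)$. So \nVCG satisfies outcome closure and is tight.

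\textbf{Totality.} Let $x$ be a Nash equilibrium of the fully expressive VCG mechanism for some $\theta$, with allocation $o=f(x)$ and payments $p$. For each $i$ fix an outcome $o^{(i)}$ in which agent $i$ receives nothing and which attains $M_i:=M_i(x_{-i})$, so $p_i=M_i-\sum_{j\ne i}x_j(o_j)$. For each agent $j$ let $\B_j=\{o_j\}\cup\{o^{(i)}_j:i\in N\setminus\{j\}\}$, which has at most $n$ elements, and let $\hx_j\in\hX_j$ be the message declaring the value $x_j(C)$ on each $C\in\B_j$ and nothing else; by free disposal of $x_j$ we have $\hx_j\le x_j$ pointwise while $\hx_j(C)=x_j(C)$ for $C\in\B_j$. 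I would then check: (i) $\sum_j\hx_j(o_j)=\sum_jx_j(o_j)=\Wmax(x)\ge\Wmax(\hx)$, so $o$ is welfare-optimal under $\hx$, and since every $\hx$-optimal allocation is also $x$-optimal, a fixed priority tie-breaking rule still selects $o$; (ii) for each $i$, $M_i(\hx_{-i})\le M_i$ (as $\hx\le x$) and $M_i(\hx_{-i})\ge\sum_{j\ne i}\hx_j(o^{(i)}_j)=\sum_{j\ne i}x_j(o^{(i)}_j)=M_i$, hence equality, so the payments are $\hp_i=M_i-\sum_{j\ne i}\hx_j(o_j)=M_i-\sum_{j\ne i}x_j(o_j)=p_i$; (iii) $V_{-i}(B,\hx_{-i})\le V_{-i}(B,x_{-i})$ for all $B$ with equality at $B=o_i$ (both equal $\sum_{j\ne i}x_j(o_j)$, $o$ being welfare-optimal under $x$ and under $\hx$), so the fact that $o_i$ maximizes $v_i(B,\theta_i)+V_{-i}(B,x_{-i})$ — which holds because $x$ is a Nash equilibrium — forces $o_i$ to maximize $v_i(B,\theta_i)+V_{-i}(B,\hx_{-i})$ as well, whence by the first paragraph $\hx_i$ is a best response against $\hx_{-i}$ in \nVCG. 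Thus $\hx$ is a Nash equilibrium of \nVCG yielding outcome $o$ and payments $p$.

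\textbf{Main obstacle.} The substantive point is item (iii) together with the count $|\B_j|\le 1+(n-1)=n$: agent $j$ must keep alive his allocated bundle $o_j$ and, for each opponent $i$, the bundle $o^{(i)}_j$ he would receive in the VCG counterfactual for $i$, so that neither the efficient allocation nor any of the $n$ pivot quantities $M_i$ changes — and there are exactly $n$ such bundles, which is where the ``$n$'' in \nVCG comes from. The one technical wrinkle is the tie-breaking clause in item (i), handled by observing that passing from $x$ to $\hx$ only shrinks the set of welfare-maximizing allocations while keeping $o$ inside it.
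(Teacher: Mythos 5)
Your proof is correct, and for totality it builds exactly the same simplified profile as the paper: each agent keeps his bid on the bundle he receives in the VCG outcome for $x$ and on the bundle he receives in each of the $n-1$ pivot outcomes obtained by removing one opponent, which is precisely where the bound of $n$ non-zero entries comes from. Where you diverge is in how the equilibrium property of the resulting profile $\hx$ is verified. The paper isolates an abstract sufficient condition it calls \emph{outcome reducibility} --- a map $h:X\rightarrow\hX$ preserving outcomes and payments such that every deviation $\hx_i'$ against $h_{-i}(x)$ can be matched by a deviation $x_i'$ against $x_{-i}$ --- proves a lemma that this implies totality with respect to Nash equilibria, and then exhibits for each $\hx_i'$ a lifted single-bundle deviation that wins the same bundle at a weakly lower price. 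You instead compute the exact best-response value $\max_B\left[v_i(B,\theta_i)+V_{-i}(B,\cdot)\right]-M_i(\cdot)$ against an arbitrary opponent profile and observe that passing from $x_{-i}$ to $\hx_{-i}$ leaves the objective unchanged at $B=o_i$ while weakly decreasing it elsewhere, so the argmax, pinned at $o_i$ by the Nash hypothesis on $x$, is preserved. Both verifications rest on the same monotonicity facts ($\hx\le x$ pointwise; the welfare of $o$ and of the pivot outcomes is preserved), but yours is more self-contained and makes explicit where the Nash property of $x$ is used, whereas the paper's lemma is stated so as to be reusable for other simplifications. You also handle one point the paper glosses over, namely tie-breaking in the allocation rule; your observation that the set of welfare-maximizing allocations only shrinks when passing from $x$ to $\hx$ is the right fix. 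Your tightness argument is the paper's outcome-closure argument with a single-bundle bid, just phrased through the same best-response formula.
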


\section{Conclusion and Future Work}

In this paper we have studied simplifications of mechanisms obtained by restricting their message space, and have found that they can be used to solve different kinds of equilibrium selection problems that occur in practice. Direct revelation mechanisms typically have several equilibria, which might be more or less desirable from the point of view of the designer. Computational constraints might also imply that only a subset of the equilibria of a mechanisms can actually be achieved, in which case agents might have to select among several Pareto optimal equilibria. 
On the other hand, restricting the message space of a mechanism often reduces the amount of social welfare that can be achieved theoretically, and this seems to pivot on whether or not agents have complete information about each others' types. The choice between mechanisms with different degrees of expressiveness therefore involves a tradeoff between a benefit of simplicity and a price of simplicity. 

The \emph{price} of simplicity can easily be quantified, for example, by the loss in social welfare potentially incurred by a simplification, and has been studied in the context of both sponsored search and combinatorial auctions. 
\citet{AGV07a} and \citet{BHN08a}, among others, have given bounds on the loss of social welfare incurred by \aGSP for different classes of valuations that are not proportional to the click-through rates. 
\citet{CKS08a} and \citet{BhRo11a} have studied the potential loss in welfare when a set of items is sold through a simplification of the combinatorial auction in which only bids that are additive in items are allowed. This bidding language requires only a small number of bids, and welfare in equilibrium turns out to be smaller by at most a logarithmic factor in the number of agents than the optimum achieved by the fully expressive mechanism. 

The \emph{benefit} of simplicity is much harder to grasp as a concept. In this paper we have argued that simplification can improve the economic properties of a mechanism by precluding bad or promoting good equilibria. A remaining challenge is to understand the benefit of simplicity in the context of simplified mechanisms for which the computation of an equilibrium might be an intractable problem, like the ones of \citeauthor{CKS08a} and \citeauthor{BhRo11a} described above. In contrast to the simplifications considered in the present paper, these mechanisms may not be able to solve the computational or informational problem of enabling agents to bid in a straightforward way. More generally, it is far from obvious how ``straightforwardness" of a mechanism should be measured, but it seems reasonable to require that agents' strategies can be computed in polynomial time.

\bibliography{abb,simplicity}

\appendix

\section{Proof of \propref{prop:ssa}} \label{app:ssa}

%
First we prove that \aGSP and \aVCG are tight on $\Theta.$  By \propref{prop:tightness} it suffices to show that for every $\theta \in \Theta$, every $i \in N$, every $\hx_{-i} \in \hX_{-i}$ and every $x_i \in X_i$ there exists $\hx_i \in \hX_i$ such that $u_i((\hx_i,\hx_{-i}),\theta_i) \ge u_i((x_i,\hx_{-i}),\theta)$.  Denote the outcome of GSP resp.~VCG on $(x_i,\hx_{-i})$ by $o.$  For $l \neq i$ let $b_l \in \reals_{\ge 0}$ be such that $\hx_{l} = (\alpha_1 \cdot b_{l}, \dots, \alpha_k \cdot b_{l}).$  Let $b_{i}$ be the $o_i$-th highest value among the $b_{l}$'s.  Then in the outcome $o'$ of \aGSP and \aVCG on $(\hx_i, \hx_{-i})$ we have $o'_i = o_i$ and, thus, $v_i(o'_i,\theta_i) = v_i(o_i,\theta_i).$  Since $i$'s payment $p_i(o_i)$ and $p_i(o'_i)$ in GSP resp.~VCG and \aGSP resp.~\aVCG only depends on $\hx_{-i}$ and is therefore the same, we conclude that $u_i((\hx_i,\hx_{-i}),\theta_i) \ge u_i((x_i,\hx_{-i}),\theta).$ 

Next we show that \aGSP and \aVCG have positive revenue on $\Theta^>$ for $n,k \ge 2$.  Suppose $\hx$ is a NE of \aGSP resp.~\aVCG.  For $i \in N$ let $b_i \in \reals_{\ge 0}$ be such that $\hx_i = (\alpha_1 \cdot b_i, \dots, \alpha_k \cdot b_i).$  Renumber the agents by non-increasing $b_i$.  The revenue achieved by \aGSP resp.~\aVCG is $\sum_{i} \alpha_i \cdot b_{i+1}$ resp.~$\sum_{i} \sum_{j > i} (\alpha_{j-1} - \alpha_j) \cdot b_j$, where the sums are over all $i, j \le \min(n,k)$ with $b_i, b_j > 0$.  Hence in both, \aGSP and \aVCG, we can only have zero revenue if $b_i = 0$ for all $i > 1$.  In this case all agents but the first remain unassigned and any agent $i > 1$ can bid $0 < b'_i < b_1$ to be assigned slot $2$ at price $0.$  Since $\theta_i \in \Theta^>_i$ we have $v_i(2, \theta_i) > 0$ and, thus, agent $i$'s utility would strictly increase.  We conclude that in both, \aGSP and \aVCG, the revenue associated with $\hx$ must be strictly positive.

Finally, we show that \aGSP and \aVCG are Vickrey compatible on $\Theta^{\alpha}.$  For this denote the outcome and payments computed by VCG for the true types $\theta \in \Theta^{\alpha}$ by $o$ and $p$. Recall that $i < j$ implies that $p_i > p_j.$ We have to argue that there are NE $\hx \in \hX$ of \aGSP and \aVCG in which the outcome and payments are identical to $o$ and $p.$ 

For \aGSP we construct $\hx \in \hX$ as follows:  1.~Renumber the agents by the slot they are assigned to in $o$.  2.~Set $b_1 = v_1(1,\theta_1)$ and for $i > 1$ set $b_i = \alpha_{i-1}^{-1} \cdot p(i-1).$  3.~For all $i$ let $\hx_i = (\alpha_1 \cdot b_i, \dots, \alpha_k \cdot b_i)$.  We have chosen $\hx \in \hX$ such that the outcome and payments computed by \aGSP on $\hx$ are identical to $o$ and $p$.  To see that $\hx$ is a NE of \aGSP observe that if agent $i$ deviates from $\hx$ to win slot $j$, then the price $p''(j)$ that he would have to pay for slot $j$ is at least as large as the price $p'(j) = p(j)$ of slot $j$ in the outcome computed by \aGSP resp.~VCG on $\hx$ resp.~$\theta.$  That is, if $i$ would strictly benefit from the deviation that gives him slot $j$, then we would have $v_i(j, \theta_i) - p(j) \ge v_i(j, \theta_i) - p''(j) > v_i(i,\theta_i) - p'(i) = v_i(i,\theta_i) - p(i)$.  But this would contradict the fact that the outcome and payments computed by VCG are envy free~\citep{Leon83a}. 

For \aVCG we can use $\hx = \theta \in \hX.$  Applying \aVCG to $\hx$ gives, of course, the same assignment and payments as applying VCG to $\theta$.  To see that $\hx$ is a NE of \aVCG observe that any beneficial deviation from $\hx$ in \aVCG would also be a beneficial deviation from $\theta$ in VCG and, thus, would contradict the fact that truthtelling is a dominant strategy of VCG. 

\section{Proof of \propref{prop:milgrom}}
\label{app:milgrom}


 	Fix $\alpha, \beta \in \ctrf$ and $\theta \in \Theta^\beta.$  Renumber the agents so that $v_1(\theta_1) \ge v_2(\theta_2) \ge \dots \ge v_n(\theta_n).$  Efficiency requires that agent $i$ win position $i$.  To get the Vickrey prices $p_i(\theta)$ for all slots $i$ it is necessary that the equilibrium bid by agent $i \in \{2,\dots,\min(n,k+1)\}$ be $b_i(\theta) = p_{i-1}(\theta)/\alpha_{i-1}$. Take $b_1(\theta) = \alpha_1 \cdot v_1(\theta_1)$ and $b_i(\theta) = 0$ for $i > \min(n,k+1).$

 	First suppose that the sequence $\{p_j/\alpha_j\}_j$ is not decreasing, \eg because $p_i(\theta)/\alpha_i < p_{i+1}(\theta)/\alpha_{i+1}.$ It follows that $b_{i+1}(\theta) = p_i(\theta)/\alpha_i < p_{i+1}(\theta)/\alpha_i+1 = b_{i+2}(\theta)$ and, thus, the bids are not ranked in the order required for efficiency.
 
 	Next suppose that the sequence $\{p_j/\alpha_j\}_j$ is decreasing.  In this case the bids are ranked in the correct order.  For a contradiction suppose that some agent $j$ could strictly benefit from a deviation to $\theta' = (\theta_1, \dots, \theta_{j-1}, \theta'_j, \theta_{j+1}, \dots, \theta_n).$ Suppose that given $\theta'$ agent $j$ is assigned slot $l$ at price $p_l(\theta')$. Since agent $j$ strictly benefits from the deviation we must have that $\beta_l \cdot v_j(\theta_j) - p_l(\theta') > \beta_j \cdot v_j(\theta) - p_j(\theta)$.  If $l < j$, then the price that agent $j$ faces for slot $l$ is at least $p_l(\theta') \ge \alpha_l \cdot p_{l-1}(\theta)/\alpha_{l-1} > \alpha_l \cdot p_l(\theta)/\alpha_l = p_l(\theta).$  If $l > j$, then the price that agent $j$ faces for slot $l$ is exactly $p_l(\theta') = p_l(\theta)$.  We conclude that $\beta_l \cdot v_j(\theta_j) - p_l(\theta) \ge \beta_l \cdot v_j(\theta_j) - p_l(\theta') > \beta_j \cdot v_j(\theta_j) - p_j(\theta)$. But this contradicts the envy freeness of the VCG assignment and payments (see, e.g., \cite{Leon83a}). 

\section{Counterexample for the VCG Mechanism} 
\label{app:counterexample}

Let $\alpha\in\ctrf$ be such that $\alpha_1=1$, $\alpha_2=0.5$, and $\alpha_3=0.4$, and let $\beta\in\ctrf$ be such that $\beta_1=1$, $\beta_2=0.9$, and $\beta_3=0.8$.  Let $\theta\in\Theta^\beta$ be such that $v_1(\theta_1)=30$, $v_2(\theta_2)=20$, $v_3(\theta_3)=10$, and $v_i(\theta_i)=0$ if $i>3$.  In the VCG outcome, slot~$1$ is assigned to agent~$1$ at price $p_1(\theta)=(\beta_1-\beta_2)\cdot v_2(\theta_2)+(\beta_2-\beta_3)\cdot v_3(\theta_3)=3$, slot~$2$ to agent~$2$ at $p_2(\theta)=(\beta_2-\beta_3)\cdot v_3(\theta_3)=1$, and slot~$3$ to agent~$3$ at $p_3(\theta)=0$.  
Now assume that the same outcome is obtained in \aVCG, and denote by $b\in\reals^n$ a bid profile that leads to this outcome.  Since both VCG and \aVCG are efficient, it must hold that $b_1\ge b_2\ge b_3$.  To get the same prices as in the VCG outcome, we must further have that $b_2=(\beta_1-\beta_2)/(\alpha_1-\alpha_2)\cdot v_2(\theta_2)=4$ and $b_3=(\beta_2-\beta_3)/(\alpha_2-\alpha_3)\cdot v_3(\theta_3)=10$.  Thus, $b_2<b_3$, which gives a contradiction.

\section{Proof of Theorem~\ref{thm:gsp_incomplete}}
\label{app:gsp_incomplete}

	If $n=1$ or $k=1$, strategies $s_i$ with $s_i(\theta_i)=\beta_1/\alpha_1 \cdot v_i(\theta_i)$ ensure that the agent with the highest valuation is assigned the first slot at a price equal to the second-highest valuation. It is easy to see that this is efficient and constitutes an ex-post equilibrium.  
	
	Now consider the case where $n=2$ and $k\geq 2$.  We claim that $s_i(\theta_i)=(\beta_1-\beta_2)/\alpha_1\cdot v_i(\theta_i)$ for $i\in\{1,2\}$ is the unique efficient ex-post equilibrium in this case. 
	To see this, observe that $s_1$ is an equilibrium strategy if and only if for all $\theta_1\in\Theta_1$ and $\theta_2\in\Theta_2$, 
\begin{align*}
	\beta_1\cdot v_2(\theta_2) - \alpha_1\cdot s_1(\theta_1) &\geq \beta_2\cdot v_2(\theta_2) \quad \text{if agent~$2$ is assigned the first slot, and} \\
	\beta_1\cdot v_2(\theta_2) - \alpha_1\cdot s_1(\theta_1) &\leq \beta_2\cdot v_2(\theta_2) \quad \text{if agent~$2$ is assigned the second slot.}
\end{align*}
	For $s_1$ to be part of an efficient equilibrium, the first inequality has to hold if $v_2(\theta_2)=v_1(\theta_1)+\epsilon$ for any $\epsilon>0$, and the second inequality has to hold if $v_2(\theta_2)=v_1(\theta_1)-\epsilon$ for any $\epsilon>0$.  By rearranging, we get that for every $\theta_1\in\Theta_1$ and every $\epsilon>0$,
\begin{align*}
	s_1(\theta_1) \leq \frac{\beta_1-\beta_2}{\alpha_1}\cdot (v_1(\theta_1)+\epsilon) 
\qquad \text{and} \qquad
	s_1(\theta_1) \geq \frac{\beta_1-\beta_2}{\alpha_1}\cdot (v_1(\theta_1)-\epsilon) \text{.}
\end{align*}
Since analogous conditions have to hold for $s_2$, the claim follows.

Finally consider the case where $n\geq 3$ and $k\geq 2$, and assume for contradiction that~$s$ is an efficient ex-post equilibrium. Observe that~$s$ must remain an equilibrium if we restrict the types in such a way that $v_\ell(\theta_\ell)>0$ if $\ell\in\{1,2\}$ and $v_\ell(\theta_\ell)=0$ otherwise. Strategies $s_1$ and $s_2$ thus have to be of the form described above, \ie $s_i(\theta_i) = (\beta_1-\beta_2)/\alpha_1\cdot v_i(\theta_i)$ for $i\in\{1,2\}$. Similarly,~$s$ remains an equilibrium if we restrict the valuations such that $v_\ell(\theta_\ell)>0$ if $\ell=3$ and $v_\ell(\theta_\ell)=0$ otherwise. It follows that $s_3(\theta_3)>0$ if $v_3(\theta_3)> 0$. Let $v_1(\theta_1)=v_2(\theta_2)=v$ for some $v>0$, choose~$\theta_3$ such that $0 < v_3(\theta_3) < (\beta_1-\beta_2)/\alpha_1 \cdot v$, and let $v_\ell(\theta_\ell)=0$ for $\ell>3$. Thus $s_1(\theta_1) = (\beta_1-\beta_2)/\alpha_1\cdot v$, $s_2(\theta_2) = (\beta_1-\beta_2)/\alpha_1\cdot v$, and $s_\ell(\theta_\ell)=0$ for $\ell>3$.  It further holds that $v_3(\theta_3)<v$, because $\beta_1>\beta_2$ and $\alpha_1=1$. We may now assume without loss of generality that for this bid profile, \aGSP assigns slot~$1$ to agent~$1$, slot~$2$ to agent~$2$, and slot~$3$ to agent~$3$; the case where slot~$1$ is assigned to agent~$2$ and slot~$2$ to agent~$1$ is symmetric.  Agent $2$ thus obtains utility $u_2=\beta_2\cdot v-s_3(\theta_3) < \beta_2\cdot v$.  If he changed his bid to $b_2'>b_2$, he would be assigned slot~$1$ at price $p_1' = \alpha_1\cdot (\beta_1-\beta_2)/\alpha_1\cdot v = (\beta_1-\beta_2)\cdot v$, and obtain utility $u'_2=\beta_1\cdot v-p_1' = \beta_1\cdot v-(\beta_1-\beta_2)\cdot v = \beta_2\cdot v > u_2$.  This contradicts the assumption that~$s$ is an equilibrium. 

\section{Proof of Theorem~\ref{thm:vcg_incomplete}}
\label{app:vcg_incomplete}

	If $\alpha=\beta$, agent~$i$ can bid truthfully on all slots by letting $s_i(\theta_i)=v_i(\theta_i)$.  If $n=1$ or $k=1$, only one slot will be assigned, and agent~$i$ can bid truthfully on this slot by letting $s_i(\theta_i)=\beta_1/\alpha_1\cdot v_i(\theta_i)$. In both cases, truthful bidding constitutes an efficient dominant strategy equilibrium and, a fortiori, an efficient ex-post equilibrium. 
	
	Now consider the case where $n=2$ and $k\geq 2$.  We claim that strategy profile~$s$ where $s_i(\theta_i)=(\beta_1-\beta_2)/(\alpha_1-\alpha_2)\cdot v_i(\theta_i)$ for $i\in\{1,2\}$ is the unique efficient ex-post equilibrium in this case.  To see this, observe that $s_1$ is an equilibrium strategy if and only if for all $\theta_1\in\Theta_1$ and $\theta_2\in\Theta_2$, 
\begin{align*}
	\beta_1\cdot v_2(\theta_2) - (\alpha_1-\alpha_2)\cdot s_1(\theta_1) &\geq \beta_2\cdot v_2(\theta_2) \quad
	\text{if agent~$2$ is assigned the first slot, and} \\
	\beta_1\cdot v_2(\theta_2) - (\alpha_1-\alpha_2)\cdot s_1(\theta_1) &\leq \beta_2\cdot v_2(\theta_2) \quad
	\text{if agent~$2$ is assigned the second slot.}
\end{align*}
	For $s_1$ to be part of an efficient equilibrium, the first inequality has to hold if $v_2(\theta_2)=v_1(\theta_1)+\epsilon$ for any $\epsilon>0$, and the second inequality has to hold if $v_2(\theta_2)=v_1(\theta_1)-\epsilon$ for any $\epsilon>0$.  By rearranging, we get that for every $\theta_1\in\Theta_1$ and every $\epsilon>0$,
\begin{align*}
	s_1(\theta_1) &\leq \frac{\beta_1-\beta_2}{\alpha_1-\alpha_2}\cdot (v_1(\theta_1)+\epsilon) 
	\qquad \text{and} \qquad
	s_1(\theta_1) \geq \frac{\beta_1-\beta_2}{\alpha_1-\alpha_2}\cdot (v_1(\theta_1)-\epsilon) \text{.}
\end{align*}
Since analogous conditions have to hold for $s_2$, the claim follows. 
	
	Finally consider the case where $n\geq 3$, $k\geq 2$, and $\alpha\neq\beta$, and assume for contradiction that~$s$ is an efficient ex-post equilibrium.  We first claim that~$s$ is symmetric. For a contradiction assume that there exist agents~$i$ and~$j$ such that $s_i\neq s_j$.  Observe that~$s$ must remain an equilibrium if we restrict the valuations of all other agents such that in every efficient assignment, agents~$i$ and~$j$ are assigned the last two slots.  This means, however, that $s_i$ and $s_j$ induce an efficient ex-post equilibrium for the case where $n=2$ and $k\geq 2$. By assumption, this equilibrium is asymmetric, contradicting an observation we have made in the previous paragraph. 

	Since $s$ is symmetric, there must exist a function $s_*:\reals\rightarrow\reals$ such that for every $i\in N$ and every $\theta_i\in\Theta_i$, $s_i(\theta_i)=s_*(v_i(\theta_i))$. It is not hard to see that $s_*(v)>0$ if $v>0$ and $s_*(v)=0$ if $v=0$.  Since by convention $\alpha_1=\beta_1=1$ and $\alpha_{k+1}=\beta_{k+1}=0$, and since $\alpha_j\neq\beta_j$ for some $j<k$, there must further exist $i>1$ such that (i)~$\alpha_{j}-\alpha_{j+1}\leq\beta_j-\beta_{j+1}$ for all $j<i$ and $\alpha_i-\alpha_{i+1}>\beta_i-\beta_{i+1}$, or (ii)~$\alpha_{j}-\alpha_{j+1}\geq\beta_j-\beta_{j+1}$ for all $j<i$ and $\alpha_i-\alpha_{i+1}<\beta_i-\beta_{i+1}$. Since the two cases are symmetric, it suffices to consider the first one. 

	Consider an arbitrary $v>0$. We distinguish two cases. 

	First assume that $s_*(v) > (\beta_i-\beta_{i+1})/(\alpha_i-\alpha_{i+1}) \cdot v$.  Let $v_\ell(\theta_\ell)=v$ for $\ell\leq i+1$ and $v_\ell(\theta_\ell)=0$ for $\ell>i+1$.  It then holds that $s_\ell(\theta_\ell) = s_*(v) > 0$ for $\ell\leq i+1$ and $s_\ell(\theta_\ell) = s_*(0) = 0$ for $\ell>i+1$.  Assume without loss of generality that for this bid profile, \aVCG assigns slot $\ell$ to agent $\ell$ for $\ell\leq i+1$, and that the remaining agents are not assigned a slot.  The utility of agent~$i$ in this case is $u_i = \beta_i\cdot v - p_i(\theta) = \beta_i\cdot v - (\alpha_i-\alpha_{i+1})\cdot s_*(v)$.  If he changed his bid to~$b$ with $0<b<s_*(v)$, he would be assigned slot $i+1$ at price~$0$ and obtain utility $\beta_{i+1} \cdot v = \beta_{i+1}\cdot v+(\alpha_{i}-\alpha_{i+1})\cdot s_*(v) - p_i > \beta_i\cdot v - p_i = u_i$. This contradicts the assumption that~$s$ is an equilibrium. 

	Now assume that $s_*(v)\leq (\beta_i-\beta_{i+1})/(\alpha_i-\alpha_{i+1}) \cdot v$, and observe that in this case $s_*(v)<v$.  Let $v_\ell(\theta_\ell)=v$ for $\ell\leq i$ and $v_\ell(\theta_\ell)=0$ for $\ell>i$.  It then holds that $s_\ell(\theta_\ell)=s_*(v)>0$ for $\ell\leq i$ and $s_\ell(\theta_\ell)=s_*(0)=0$ for $\ell>i$.  Assume without loss of generality that for this bid profile, \aVCG assigns slot $\ell$ to agent $\ell$ for $\ell\leq i$, and that the remaining agents are not assigned a slot.  The utility of agent~$i$ under this assignment is $u_i = \beta_i\cdot v-p_i(\theta) = \beta_i\cdot v$.  If he changed his bid to $b>s_*(v)$, he would be assigned slot~$1$ at price $\sum_{j<i} (\alpha_{j}-\alpha_{j+1})\cdot s_*(v)$ and obtain utility $u_i' = \beta_1\cdot v-\sum_{j<i}(\alpha_{j}-\alpha_{j+1})\cdot s_*(v) > \beta_1\cdot v-\sum_{j<i}(\beta_{j}-\beta_{j+1})\cdot v = \beta_i\cdot v = u_i$. This again contradicts the assumption that~$s$ is an equilibrium. 

\section{Proof of \propref{prop:expostofvcg}}\label{app:expostofvcg}
%
	Consider two agents $i,i'\in N$ and two consecutive slots $j$ and $j+1$.  Fix $\theta_i$. Since the strategy $s_i$ of agent $i$ does not depend on the types of the other agents, we can choose the types of the agents in $N\setminus\{i,i'\}$ in such a way that an efficient outcome assigns slot $j$ to agents $i$ and slot $j+1$ to agent $i'$, or vice versa. Doing so also fixes the bids of all agents in $N\setminus\{i,i'\}$, and thus the payment $p_{j+1}$ associated with slot $j+1$. 
	
	For $s_i$ to be part of an ex-post equilibrium it must hold that
\begin{align*}
	& \alpha_j \cdot v_{i'}(\theta_{i'}) - (s_i(j,\theta_i)-s_i(j+1,\theta_i)) - p_{j+1} \geq \alpha_{j+1} \cdot v_{i'}(\theta_{i'}) - p_{j+1} 
	\intertext{if agent $i'$ is assigned slot $j$, and}
	& \alpha_{j+1} \cdot v_{i'}(\theta_{i'}) - p_{j+1} \geq \alpha_j \cdot v_{i'}(\theta_{i'}) - (s_i(j,\theta_i)-s_i(j+1,\theta_i)) - p_{j+1} 
\end{align*}
	if agent $i'$ is assigned slot $j+1$. For the equilibrium to be efficient the first inequality must hold if $v_{i'}(\theta_{i'})=v_{i}(\theta_i)+\epsilon$ for any $\epsilon>0$, and the second inequality must hold if $v_{i'}(\theta_{i'})=v_{i}(\theta_i)-\epsilon$ for any $\epsilon > 0$. By substituting $v_{i'}(\theta_{i'})$ accordingly in the two inequalities and considering arbitrarily small $\epsilon>0$ we obtain
\[
	s_i(j,\theta_i) = (\alpha_j - \alpha_{j+1}) \cdot v_i(\theta_i) + s_i(j+1,\theta_{i}) .
\]
	Since this equality must hold for every agent, the payment for slot $i$ equals
\begin{align*}
	p_i &= \sum_{j=i}^{\mathclap{\min(k,n-1)}} (s_{j+1}(j,\theta_{j+1})-s_{j+1}(j+1,\theta_{j+1})) 
	= \sum_{j=i}^{\mathclap{\min(k,n-1)}}((\alpha_j - \alpha_{j+1})\cdot v_{j+1}(\theta_{j+1})),	
\end{align*}
	which is exactly the VCG payment for that slot. 
	
	The stronger claim for $n>k$ follows by setting $\alpha_{k+1}=0$ and $s_{i}(k+1,\theta_i)=0$, and deriving strategies inductively from slot~$k$ through~$1$ according to the above equality. 

\section{Efficiency Requires an Exponential Number of Bids}
\label{app:exponential}

\begin{proposition} \label{prop:exponential}
	Consider a VCG combinatorial auction with a set $G$ of items, $|G|\geq 3$, and a set $N$ of agents, $|N|\geq 2$, and assume that $s=(s_1,\dots,s_n)$ is an efficient ex-post equilibrium. Then there exists a type profile $\theta=(\theta_1,\dots,\theta_n)$ such that for some agent $i\in N$ and every pair of bundles $X,Y\in 2^G$ such that $X\subset Y$, $s_i(X,\theta_i) \neq s_i(Y,\theta_i)$. 
\end{proposition}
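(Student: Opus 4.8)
The plan is to pin down a single type $\theta_1$ of agent~$1$ whose equilibrium message $w_1(\cdot):=s_1(\cdot,\theta_1)$ is \emph{strictly increasing along chains}, i.e.\ $w_1(X)<w_1(Y)$ whenever $X\subsetneq Y$; then any type profile $\theta$ with this first component witnesses the statement with $i=1$, since in the incomplete-information setting $s_1(\theta_1)$ does not depend on the other agents' types. Throughout I assume, as usual, that each $\Theta_j$ contains all valuations satisfying free disposal (and valuing $\emptyset$ at zero). Concretely I would take $\theta_1$ with the strictly monotone valuation $v_1(B)=|B|$ and give every agent other than~$1$ and~$2$ the zero valuation. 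The leverage is that even though $s_1(\theta_1)$ is one fixed message, it must be consistent with efficiency and with agent~$2$'s ex-post best-response condition against \emph{every} type of agent~$2$; I would use a different, tailor-made type of agent~$2$ for each pair $X\subsetneq Y$.

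So fix $X\subsetneq Y\subseteq G$. First I would choose agent~$2$'s type $\theta_2$ with a strictly monotone valuation $v_2$ such that, against $v_1$ and the zero valuations of the other agents, the \emph{unique} efficient allocation gives $Y$ to agent~$1$, $G\setminus Y$ to agent~$2$, and nothing to anyone else. This is elementary: free disposal together with strict monotonicity of $v_1$ and $v_2$ already forces every efficient allocation to split all of $G$ between agents~$1$ and~$2$, so it suffices to make $v_1(o_1)+v_2(G\setminus o_1)$ uniquely maximal at $o_1=Y$; for $Y\subsetneq G$ one can take $v_2(T)=|T|+\varepsilon$ if $T=G\setminus Y$ and $v_2(T)=|T|$ otherwise, with $0<\varepsilon<1$, and the degenerate case $Y=G$ is handled by $v_2(T)=\tfrac12\,|T|$.

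Now let $o^\ast$ be the outcome that $s$ produces on $\theta=(\theta_1,\theta_2,\theta^0_3,\dots,\theta^0_n)$, with $\theta^0_j$ a zero type. Since $s$ is \emph{efficient}, $o^\ast$ is the unique efficient allocation, so agent~$1$ gets $Y$ and agent~$2$ gets $G\setminus Y$. I would then use the standard fact that truthful reporting is a dominant strategy of the (fully expressive) VCG mechanism, so agent~$2$'s realized utility under $s$ equals what it would obtain by reporting $v_2$; and since a VCG agent's utility equals the reported social welfare with its own true valuation substituted back in, up to a term independent of that agent's report, this forces $o^\ast$ to maximize $v_2(o_2)+\sum_{j\ne 2}w_j(o_j)$ over all feasible allocations~$o$, where $w_j:=s_j(\theta_j)$. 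Comparing $o^\ast$ with the feasible allocation that gives $X$ to agent~$1$, $G\setminus X$ to agent~$2$, and $\emptyset$ to everyone else (using $w_j(\emptyset)=0$) yields $v_2(G\setminus Y)+w_1(Y)\ge v_2(G\setminus X)+w_1(X)$. Since $X\subsetneq Y$ forces $G\setminus Y\subsetneq G\setminus X$ and $v_2$ is strictly monotone, $v_2(G\setminus X)-v_2(G\setminus Y)>0$, hence $w_1(Y)>w_1(X)$. As $X\subsetneq Y$ was arbitrary, $s_1(\cdot,\theta_1)$ is injective along chains; in particular $s_i(X,\theta_i)\ne s_i(Y,\theta_i)$ for all $X\subsetneq Y$, with $i=1$.

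The step needing the most care is the construction of $\theta_2$: one has to confirm that $v_2$ is strictly monotone and that the intended allocation really is the \emph{unique} efficient one, and to dispatch the $Y=G$ boundary case — but that is a routine finite check once the formula is fixed. A second, minor point is that the VCG best-response identity is applied with the passive agents $3,\dots,n$ present; this is harmless, because a deviation by agent~$2$ can at best secure some bundle~$B$ for itself and then have $G\setminus B$ allocated optimally among the other agents, whose joint contribution there is still at least $w_1(G\setminus B)$, so the comparison against the allocation $(X,G\setminus X,\emptyset,\dots)$ is not affected. Notably the argument uses nothing beyond efficiency and agent~$2$'s equilibrium condition; it does not invoke \thmref{thm:hm04} or variable participation, although the conclusion is of course consistent with that characterization, since projecting onto any proper quasi-field collapses some chain pair.
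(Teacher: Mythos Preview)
Your proof is correct and follows essentially the same route as the paper: fix a strictly monotone type $\theta_1$ for agent~$1$, then for each chain pair $X\subsetneq Y$ choose a tailored type $\theta_2$ making $(Y,G\setminus Y)$ the unique efficient allocation, and combine efficiency of~$s$ with agent~$2$'s best-response condition (against the fixed message $s_1(\theta_1)$) to rule out $s_1(X,\theta_1)=s_1(Y,\theta_1)$. The differences are cosmetic: the paper argues by contradiction and reduces to $|N|=2$ at the outset, whereas you derive the strict inequality $w_1(X)<w_1(Y)$ directly via the VCG utility identity and keep the passive agents around; your explicit formulas for $v_1$ and $v_2$ make the uniqueness checks (including the $Y=G$ boundary case) more transparent than the paper's abstract ``choose $\theta_2$ such that\dots''.
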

\begin{proof}
	We can assume without loss of generality that $|N|=2$. Otherwise, for $i>2$, we could choose $\theta_i$ such that $v_i(X,\theta_i)=0$ for every $X\in 2^G$, which by efficiency of $s$ would immediately imply that $s_i(X,\theta_i)=0$ for all $X \subseteq 2^G$. 
	
	Choose $\theta_1$ such that $v_1(X,\theta_1)\neq v_1(Y,\theta_1)$ for all $X,Y\in 2^G$ with $X\subset Y$. Further assume for contradiction that there exist $X,Y\in 2^G$ such that $X\subset Y$ and $s_1(X,\theta_1)=s_1(Y,\theta_1)$. Observe that $X^c\supset Y^c$, and that $v_1(Y,\theta_1)>v_1(X,\theta_1)$ by assumption. 
	We can now choose $\theta_2$ such that $v_2(X^c,\theta_2) > v_2(Y^c,\theta_2)$ and $v_1(Y,\theta_1) + v_2(Y^c,\theta_2) > v_1(X,\theta_1) + v_2(X^c,\theta_2)$. By choosing $v_2(Y^c,\theta_2)$ large enough we can further ensure that the only efficient outcome assigns bundle $Y$ to agent~$1$ and bundle $Y^C$ to agent~$2$. 
	Now, since $s_1(X,\theta_1)=s_1(Y,\theta_1)$ and $v_2(X^c,\theta_2)>v_2(Y^c,\theta_2)$, $s_2$ can only be a best response to $s_1$ if the VCG mechanism assigns $X$ to agent~$1$ and $X^c$ to agent~$2$. This contradicts efficiency of $s$. 
\end{proof}

\section{Proof of Theorem~{\ref{thm:nVCG}}}
\label{app:nVCG}

We first prove that \nVCG is tight. By \propref{prop:tightness} it suffices to show that \nVCG satisfies outcome closure. Fix valuations $v_j$ and types $\theta_j$ for all $j \in N$, and consider an arbitrary agent $i \in N.$ We claim that for every $x_i \in X_i$ and every $\hx_{-i} \in \hX_{-i}$, there exists $\hx_i \in \hX_i$ such that
\[
    u_i((\hx_i,\hx_{-i}), \theta_i) \ge u_i((x_i, \hx_{-i}), \theta_i).
\] 
Denote the outcome of VCG for $(x_i,\hx_{-i})$ by $o$. Let $\hx_i(C) = x_i(C)$ for $C = o_i$ and let $\hx_i(C) = 0$ otherwise. We know that $o$ achieves the same social welfare under $(\hx_i,\hx_{-i})$ as under $(x_i,\hx_{-i}).$ Since $\hx_i(C) \le x_i(C)$ for all $C$, we also know that the social welfare achieved by any outcome $o' \neq o$ under $(\hx_i,\hx_{-i})$ is weakly smaller than the social welfare achieved by $o$ under $(\hx_i,\hx_{-i})$. Hence agent $i$ gets the same bundle of items, namely $o_i$, under $(\hx_i,\hx_{-i})$ and $(x_i,\hx_{-i}).$ Since $i$'s payment depends only on $\hx_{-i}$ we conclude that $u_i((\hx_i,\hx_{-i}), \theta_i) \ge u_i((x_i, \hx_{-i}), \theta_i)$. 

Now we turn to totality. To this end, we consider a property of a simplification we call outcome reducibility, which requires that there exists a way of mapping messages of the original mechanism to messages of the simplification such that (i)~outcomes and payoffs are preserved, and such that (ii)~for every agent and every choice of messages for the other agents, the maximum utility the agent can obtain, by choosing any action, is at least as high as the utility he can obtain by choosing from his restricted message set when the messages of the other agents have been mapped to their restricted message sets. More formally, a simplification $(N,\hX,\hf,\hp)$ of a mechanism $(N,X,f,p)$ satisfies \emph{outcome reducibility} if there exists a mapping $h:X\rightarrow\hX$ with the following properties: (i)~for every $x\in X$, $f(x) = f(h(x))$ and $p(x)=p(h(x))$; (ii)~for every $i\in N$, every $x \in X$, and every $\hx'_i\in\hX_i$, there exists $x'_i\in X_i$ such that $u_i((x'_i,x_{-i}),\theta_i)\geq u_i((\hx'_i,h_{-i}(x)),\theta_i)$. 
This turns out to be sufficient for totality, but only with respect to Nash equilibria. 
\begin{lemma} \label{lem:totality}
Every simplification that satisfies outcome reducibility is total with respect to Nash equilibria. 
\end{lemma}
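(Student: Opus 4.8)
The plan is to show that outcome reducibility implies totality with respect to Nash equilibria by taking an arbitrary Nash equilibrium $x$ of the original mechanism $M=(N,X,f,p)$ and showing that its image $h(x)$ under the mapping $h$ guaranteed by outcome reducibility is a Nash equilibrium of the simplification $\hM=(N,\hX,\hf,\hp)$. Since property (i) of outcome reducibility guarantees $f(x)=f(h(x))$ and $p(x)=p(h(x))$, the outcome and payments at $h(x)$ coincide with those at $x$, so it only remains to verify that $h(x)$ is indeed an equilibrium; this simultaneously establishes the ``same outcome and payments'' requirement in the definition of totality.

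For the equilibrium verification, I would fix a type profile $\theta\in\Theta$, an agent $i\in N$, and an arbitrary deviation $\hx'_i\in\hX_i$, and show that $u_i((\hx'_i,h_{-i}(x)),\theta_i)\leq u_i(h(x),\theta_i)$. The key steps are: first, apply property (ii) of outcome reducibility to the message profile $x$ and the deviation $\hx'_i$ to obtain some $x'_i\in X_i$ with $u_i((x'_i,x_{-i}),\theta_i)\geq u_i((\hx'_i,h_{-i}(x)),\theta_i)$; second, use the fact that $x$ is a Nash equilibrium of $M$ to conclude $u_i((x'_i,x_{-i}),\theta_i)\leq u_i(x,\theta_i)$; third, invoke property (i) of outcome reducibility, which gives $u_i(x,\theta_i)=v_i(f(x),\theta_i)-p_i(x)=v_i(f(h(x)),\theta_i)-p_i(h(x))=u_i(h(x),\theta_i)$. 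Chaining these three inequalities yields $u_i((\hx'_i,h_{-i}(x)),\theta_i)\leq u_i(h(x),\theta_i)$, as desired. Since $i$ and $\hx'_i$ were arbitrary, $h(x)$ is a Nash equilibrium of $\hM$.

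I do not expect a serious obstacle here; the proof is essentially a direct unwinding of the two defining properties of outcome reducibility combined with the equilibrium property of $x$. The one point requiring a little care is making sure the quantifiers line up correctly: property (ii) supplies, for the specific profile $x$ and the specific deviation $\hx'_i$, a witness $x'_i$, and this is exactly what is needed to reduce a deviation in $\hM$ to a deviation in $M$. It is also worth noting explicitly why this argument does not extend to ex-post equilibria: in the ex-post setting the relevant profile against which agent $i$ must best-respond is $s_{-i}(\theta_{-i})$ for varying $\theta_{-i}$, and property (ii) of outcome reducibility is formulated pointwise in terms of a fixed profile $x_{-i}=h_{-i}(x)$ rather than uniformly over all type profiles, so the substitution step cannot be carried out uniformly — hence the lemma is stated only for Nash equilibria, matching the hypotheses of \thmref{thm:nVCG}.
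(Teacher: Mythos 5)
Your proposal is correct and is essentially the paper's own argument: both use property~(i) to see that $h(x)$ has the same outcome and payments as $x$, and both reduce a profitable deviation $\hx'_i$ in $\hM$ via property~(ii) to a deviation $x'_i$ in $M$, contradicting (or, in your direct phrasing, being bounded by) the Nash property of $x$. The only cosmetic difference is that the paper argues by contradiction while you chain the inequalities directly; your closing remark on why the argument does not lift to ex-post equilibria is a nice, accurate observation but not part of the proof itself.
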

\begin{proof}
Fix $\theta\in\Theta$.  Consider a mechanism $M=(N,X,f,p)$, a simplification $\hM=(N,\hX,\hf,\hp)$ that has property P2, and a Nash equilibrium $x$ of $M$.  Assume for contradiction that $h(x)\in\hX$ is not a Nash equilibrium of $\hM$.  Then, for some $i\in N$, there exists $\hx'_i\in\hX_i$ such that $u_i((\hat{x}'_i,h_{-i}(x)),\theta_i) > u_i(h(x),\theta_i)$.  Since $\hM$ has property P2, there further exists $x'_i \in X_i$ such that $u_i((x'_i,x_{-i}),\theta_i) \geq u_i((\hx'_i,h_{-i}(x)),\theta_i)$.  It follows 
that $u_i((x'_i,x_{-i}),\theta_i) > u_i(h(x),\theta_i) = u_i(x,\theta_i)$, which contradicts the assumption that $x$ is a Nash equilibrium of $M$. 
\end{proof}

By \lemref{lem:totality} it now suffices to show that $n$-VCG satisfies outcome reducibility. 
We compute $h(x)$ as follows: 1.~Denote the outcome of VCG for $x$ by $o$. For all agents $i$ and $B = o_i$ mark $x_i(B)$. 2.~For each agent $j$ let $o'$ denote the outcome of VCG if $j$ was removed. For each such outcome $o'$, all agents $i$, and $B = o'_i$ mark $x_i(B)$. 3.~For all agents $i$ and all bundles $B$ set $\hx_i(B) = x_i(B)$ if $x_i(B)$ was marked and $\hx_i(B) = 0$ otherwise.

To (i): The outcome that maximizes welfare under $x$ also maximizes welfare under $h(x)$ and the welfare achieved by this outcome is the same under $x$ and $h(x)$. Similarly, for each agent $i$, the welfare achieved by the outcome that maximizes welfare if agent $i$ is removed is the same under $x$ and $h(x)$. Hence the outcomes $f(h(x))$ and $f(x)$ and the prices $p(h(x))$ and $p(x)$ for $h(x)$ and $x$ are the same.

To (ii): Consider $\hx'_i.$ Denote the outcome computed by VCG for $(\hx'_i,h_{-i}(x))$ by $\hat{o}.$ We claim that there exists $x'_i$ such that in the outcome $o$ computed by VCG for $(x'_i,x_{-i})$ we have $o_i = \hat{o}_i.$ Let $x'_i(C) = \Wmax(\hx'_i, h_{-i}(x)) + \epsilon$ for $C = o_i$ and some $\epsilon > 0$ and let $x'_i(C) = 0$ otherwise. If $o_i = \hat{o}_i$, then the social welfare $\Wmax(x'_i, x_{-i})$ of the outcome of VCG for $(x'_i, x_{-i})$ is at least $\Wmax(\hx'_i,h_{-i}(x)) + \epsilon$. Otherwise, the social welfare is $\Wmax(x'_i, x_{-i}) = \Wmax(0,x_{-i}) = \Wmax(0,h_{-i}(x)) \le \Wmax(\hx'_i, h_{-i}(x))$ and, thus, strictly smaller. We conclude that $o_i = \hat{o}_i.$

We further claim that the price $p_i(o_i)$ that agent $i$ has to pay for $o_i$ under $(x'_i,x_{-i})$ is smaller/equal to the price $\hat{p}_i(o_i)$ that agent $i$ has to pay for $o_i$ under $(\hx'_i,h_{-i}(x))$. Since $\Wmax(0,x_{-i}) = \Wmax(0,h_{-i}(x))$ and $\Wmax^{o_i}(0,x_{-i}) \ge \Wmax^{o_i}(0,h_{-i}(x))$, where $\Wmax^{o_i}(\dots)$ denotes the maximal social welfare for the corresponding message profile if all items in $o_i$ are removed from the set of items, we have that $p_i(o_i) = \Wmax(0,x_{-i}) - \Wmax^{o_i}(0,x_{-i}) \le \Wmax(0,h_{-i}(x)) - \Wmax^{o_i}(0,h_{-i}(x)) = \hat{p}_i(o_i).$

We conclude that $u_i((x'_i,x_{-i}), \theta_i) \ge u_i((\hx'_i, h_{-i}(x)), \theta_i).$

\typeout{get arXiv to do 4 passes: Label(s) may have changed. Rerun}

\end{document}